\newtheorem{theorem}{Theorem}[section]
\newtheorem{lemma}[theorem]{Lemma}
\newtheorem{corollary}[theorem]{Corollary}
\theoremstyle{definition}
\newtheorem{example}[theorem]{Example}
\theoremstyle{remark}
\newtheorem{remark}[theorem]{Remark}
\numberwithin{equation}{section}
\newcommand{\abs}[1]{\left\lvert#1\right\rvert}
\newcommand{\norm}[1]{\|#1\|}
\newcommand{\inner}[1]{\langle#1\rangle}
\newcommand{\complex}{\mathbb{C}}
\newcommand{\real}{\mathbb{R}}
\newcommand{\integer}{\mathbb{Z}}
\newcommand{\GL}{\mathrm{GL}}
\newcommand{\PSLQ}{\texttt{PSLQ}}
\begin{document}

\title{The PSLQ Algorithm for Empirical Data}

\author{Yong Feng}
\address{Chongqing Key Lab of Automated Reasoning and Cognition,
	 Chongqing Institute of Green and Intelligent Technology,
	 Chinese Academy of Sciences, Chongqing 400714, China}
\email{yongfeng@cigit.ac.cn}
\thanks{The first author  was supported by NNSF (China) Grant 11671377 and 61572024.}

\author{Jingwei Chen}
\address{Chongqing Key Lab of Automated Reasoning and Cognition,
	Chongqing Institute of Green and Intelligent Technology,
	Chinese Academy of Sciences, Chongqing 400714, China}
\email{chenjingwei@cigit.ac.cn}
\thanks{The second author was supported by NNSF (China) Grant 11501540, CAS ``Light of West China'' Program and Youth Innovation Promotion Association of CAS}
\thanks{The second author is the corresponding author.}

\author{Wenyuan Wu}
\address{Chongqing Key Lab of Automated Reasoning and Cognition,
	Chongqing Institute of Green and Intelligent Technology,
	Chinese Academy of Sciences, Chongqing 400714, China}
\email{wuwenyuan@cigit.ac.cn}
\thanks{The third author was supported by NNSF (China) Grant 11471307 and 11771421, Chongqing Research Program (cstc2015jcyjys40001, KJ1705121), and CAS Research Program of Frontier Sciences (QYZDB-SSW-SYS026).}

\subjclass[2000]{Primary 11A05, 11Y16; Secondary 68-04}

\date{July 17, 2017 and, in revised form, Nov. 8 2017, and Jan. 9, 2018.}


\keywords{Integer relation, PSLQ, empirical data}

\begin{abstract}
The celebrated integer relation finding algorithm PSLQ has been successfully used in many applications. PSLQ was only analyzed theoretically for exact input data, however, when the input data are irrational numbers, they must be approximate ones due to the finite precision of the  computer.  When the algorithm takes empirical data (inexact data with error bounded) instead of exact real numbers as its input,  how do we theoretically ensure  the output of the algorithm  to be an exact integer relation?

In this paper, we investigate the PSLQ algorithm for empirical data as its input. First, we give a termination condition for this case. Secondly, we analyze a perturbation on the hyperplane matrix constructed from the input data and hence disclose a relationship between the accuracy of the input data and the output quality (an upper bound on the absolute value of the inner product of the exact data and the computed integer relation), which naturally leads to an error control strategy for PSLQ. Further, we analyze the complexity bound of the PSLQ algorithm for empirical data. Examples on transcendental numbers and algebraic numbers show the meaningfulness of our error control strategy.
\end{abstract}

\maketitle

\vspace{-.8cm}

\section{Introduction}

A vector $\bm {m}\in\integer^n\setminus\{\bm{0}\}$ is called an \emph{integer relation} for $\bm{\alpha}\in\real^n$ if $\inner{{\bm\alpha}, \bm m} =0$.  The problem of finding integer relations for rational or real numbers can be dated back to the time of Euclid. It is closely related to the problem of finding a small vector in a Euclidean lattice. In fact, the celebrated Lenstra-Lenstra-Lov\'asz (LLL) lattice basis reduction algorithm can be used to find an integer relation. This was already pointed out in \cite[page 525]{LenstraLenstraLovasz1982}. The HJLS algorithm \cite{HastadJustLagariasSchnorr1989} is the first proved polynomial time algorithm for integer relation finding. The PSLQ algorithm \cite{FergusonBailey1992, FergusonBaileyArno1999} is one of the most frequently used algorithms to find integer relations. Both HJLS and PSLQ can be viewed as algorithms to compute the intersection between a lattice and a vector space; see \cite{ChenStehleVillard2013}. For detailed historical notes, we refer to \cite{HastadJustLagariasSchnorr1989, FergusonBaileyArno1999}. Nowadays, integer relation finding has been successfully used in different areas, such as experimental math \cite{BaileyBorweinKimberleyLadd2017,Stenger2017} and physics \cite{BaileyBorwein2015}. For more applications, we refer to \cite{BorweinLisonek2000}  and the references therein.

However, there always exist some data that can only be obtained with limited accuracy. Indeed, all the input data in applications above are of limited accuracy, and hence not exact values. Consequently, it is of great importance to study how to obtain exact integer relations for $\bm \alpha$ from an approximation of $\bar{\bm \alpha}$ by PSLQ.

To the best of our knowledge, there exists only an experimental result on this topic, due to Bailey.  Bailey in \cite{Bailey2000} suggested that if one wishes to recover an integer relation with coefficients bounded by $G$ for an $n$-dimensional vector  $\bm \alpha$, then the input vector $\bm \alpha$ must be specified to at least $n\log_{10} G$ decimal digits, and one must employ floating-point arithmetic with at least $n\log_{10} G$ accurate digits.  Using this experimental result, a lot of non-trivial integer relations have been discovered by several implementations of PSLQ, such as MPFUN90 \cite{Bailey1990}, ARPREC \cite{ARPREC}, etc., all of which employ high precision floating-point  arithmetic. Recently, a PSLQ implementation in a new arbitrary precision package MPFUN2015 \cite{MPFUN2015} has been used  to discover large Poisson polynomials \cite{BaileyBorweinKimberleyLadd2017}, including the largest successful integer relation computations performed to date (using $64,000$ decimal digits), based on the precision estimation suggested by Bailey.
Bailey's precision estimation works well in practice, however lacks theoretical support. In this paper, we attempt to provide a theory for the error control of PSLQ.

Let $\bm \alpha=(\alpha_1, \cdots,\alpha_n)\in\real^n$ be the \textit{intrinsic data} (exact data that may not be known) with an integer relation within a $2$-norm bound $M$, and $\bar{\bm \alpha}$ be the \textit{empirical data} with $\norm{\bm \alpha - \bar{\bm \alpha}}_2<\varepsilon_1$. Generally, $\bar{\bm \alpha}$ may not have an integer relation within the bound $M$. Therefore, the PSLQ algorithm may not terminate when we compute an integer relation from $\bar{\bm \alpha}$ because the element $h_{n,n-1}$ of the hyperplane matrix (see \eqref{Halpha} and Algorithm \ref{algo:PSLQ}) may never be transformed to zero.

So, firstly, we propose a new termination condition for the PSLQ algorithm. Secondly, even if PSLQ returns $\bm m$ from $\bar{\bm \alpha}$, we need to determine whether $\inner{\bm \alpha,\bm m}=0$, without knowing the intrinsic data $\bm{\alpha}$. To do this requires a gap bound $\delta$ for $\abs{\inner{\bm \alpha,\bm m}}$. A so-called \emph{gap bound} for $\abs{\inner{\bm \alpha,\bm m}}$ is that there exist a given $\delta>0$ such that $\abs{\inner{\bm \alpha,\bm m}}>\delta$ whenever $\abs{\inner{\bm \alpha,\bm m}}\ne 0$. If there exists no further information about $\bm \alpha$, then there does not exist a gap bound in general. However, a gap bound can be given when $\alpha_i$'s are algebraic numbers \cite{KannanLenstraLovasz1988, Just1989}. Once we have a gap bound $\delta$ and  $\abs{\inner{\bm \alpha,\bm m}}<\delta$, it guarantees $\inner{\bm\alpha,\bm m}=0$, even without knowing $\bm{\alpha}$. In this paper, we will not discuss the gap bound, but focus on how to estimate $\abs{\inner{\bm \alpha,\bm m}}$ via establishing a relation between $\abs{\inner{\bm \alpha,\bm m}}$ and $\abs{\inner{ \bar{\bm\alpha},\bm m}}$. Thirdly, we analyze the computation complexity of the PSLQ algorithm for empirical data. Finally, we also give some illustrative examples that show how helpful the error control strategies are for applications of PSLQ.

\section[Preliminaries]{Preliminaries}\label{sec:Pre}

For completeness, we recall the PSLQ algorithm in this section. As indicated in \cite{FergusonBaileyArno1999}, PSLQ works for both of the real case and the complex case. For the complex case, it may find a Gaussian integer relation for a given $\bm \alpha\in\complex^n$. For simplicity, we only consider the real case here.

Let $\bm \alpha=(\alpha_1,\cdots,\alpha_n)\in\mathbb{R}^n$ with $\alpha_i\ne 0$ for $i=1,\cdots,n$. Given $\bm \alpha$ as above, define the \emph{hyperplane matrix} $\bm H_\alpha=(h_{i,j})$ with
\begin{equation}\label{Halpha}
h_{i,j} =\left\{
\begin{array}{ll}
  0, & \mbox{if } 1\le i<j\le n-1, \\
  s_{i+1}/s_i, & \mbox{if } 1\le i=j\le n-1, \\
  -\alpha_i\alpha_j/(s_js_{j+1}), &  \mbox{if } 1\le j<i\le n,
\end{array}
\right.
\end{equation}
where $s_j^2=\sum_{k=j}^{n}\alpha_k^2>0$ for $j=1,2,\cdots,n$.

Further, we can assume that $\norm{\bm \alpha}=1$ ($\norm{\cdot}$ is the Euclidean norm), since the hyperplane matrix $\bm H_\alpha$ is scale-invariant with respect to $\bm \alpha$, i.e., $\bm H_\alpha=\bm H_{c\cdot\alpha}$ for $c\in\real\setminus\{0\}$.

Algebraically, PSLQ produces a series of unimodular matrices in $\GL_n(\integer)$ multiplying $\bm H_\alpha$ from left and a series of orthogonal matrices multiplying $\bm H_\alpha$ from right. These matrices are produced by the following subroutines (Algorithm \ref{algo:szr}, \ref{algo:bswp} and \ref{algo:crnr}).

\begin{algorithm}
	\caption{(\texttt{SizeReduction})}\label{algo:szr}
	\begin{algorithmic}[1]
		\REQUIRE A lower trapezoidal $n\times (n-1)$ matrix $\bm{H}=(h_{i,j})$ with $h_{i,j}=0$ if $j>i$ and $h_{j,j}\ne 0$.
		\ENSURE A unimodular matrix $\bm D$ such that $\bm H:=\bm D\cdot \bm H = (h_{i,j})$ satisfying  $|h_{i,j}|\le |h_{j,j}|/2$ for $1\le j<i\le n$.
		
		\STATE $\bm{D} := \bm{I}_n$.
		\FOR {$i$ from $2$ to $n$}
		\FOR {$j$ from $i-1$ to $1$ by stepsize $-1$}
		\STATE {$q :=\lfloor h_{i,j}/h_{j,j}+0.5\rfloor$.}
		\FOR {$k$ from $1$ to $n$}
		\STATE {$d_{i,k}:= d_{i,k}-qd_{j,k}$.}
		\ENDFOR
		\ENDFOR
		\ENDFOR
	\end{algorithmic}
\end{algorithm}

We call the process in Algorithm \ref{algo:szr} \emph{size reduction}. In the PSLQ paper \cite{FergusonBaileyArno1999}, size reduction is called Hermite reduction. To avoid confusedness with the Hermite Normal Form for integral matrices or the Hermite reduction  in the integration of algebraic functions \cite{Hermite1872} (also for creative telescoping) and to be consistent with the similar process used in lattice basis reduction algorithms, we replace ``Hermite reduction'' by ``size reduction''.

\begin{algorithm}[H]
	\caption{(\texttt{BergmanSwap})}\label{algo:bswp}
	\begin{algorithmic}[1]
		\REQUIRE A lower trapezoidal $n\times (n-1)$ matrix $\bm{H}=(h_{i,j})$ with $h_{i,j}=0$ if $j>i$ and $h_{j,j}\ne 0$, and a parameter $\gamma>2/\sqrt{3}$.
		\ENSURE A unimodular matrix $\bm D$ resulting from the exchange of two rows of the identity matrix and the exchange position $r$.
		
		\STATE $\bm{D} := \bm{I}_n$.
		\STATE {Choose $r$ such that $\gamma^r|h_{r,r}|= \max_{j\in\{1,\cdots,n-1\}}\left\{\gamma^j\cdot |h_{j,j}|\right\}$, and then swap the $r$-th row and the $(r+1)$-th row of $\bm D$.}
	\end{algorithmic}
\end{algorithm}

After a Bergman swap, $\bm{H}:=\bm D\bm H$ is usually not  lower trapezoidal. We may multiply the updated $\bm H$ by an orthogonal matrix $\bm{Q}$ from the right such that $\bm{HQ}$ is again a lower trapezoidal matrix. This procedure is called \texttt{Corner}, which is equivalent to performing LQ-decomposition of $\bm H$ (QR-decomposition of $\bm H^T$).
Suppose after a Bergman swap, the $r$-th and $(r+1)$-th rows of $\bm H$ are swapped. Let
\begin{equation}\label{eq:beta}
\eta=h_{r,r},\quad \beta=h_{r+1,r},
\quad \lambda=h_{r+1,r+1}, \quad \delta=\sqrt{\beta^2+\lambda^2}.
\end{equation}
Then we can give the following explicit formula for \texttt{Corner} instead of computing the full LQ-decomposition.

\begin{algorithm}[H]
	\caption{(\texttt{Corner})}\label{algo:crnr}
	\begin{algorithmic}[1]
		\REQUIRE An  $n\times (n-1)$  matrix $\bm H$ that is obtained by a Bergman swap with the $r$-th and $(r+1)$-th rows swapped, where $r<n-1$.
		\ENSURE An orthogonal matrix $\bm Q$ such that $\bm H \bm Q$ is the L-factor of the LQ-decomposition of $\bm H$.
		\STATE Return $\bm Q=(q_{i,j})\in\real^{(n-1)\times(n-1)}$ with
		$$q_{i,j}=\begin{cases}
		\beta/\delta & \text{if $i=r$,$j=r$,}\\
		-\lambda/\delta &\text{if $i=r$,$j=r+1$,}\\
		\lambda/\delta &\text{if $i=r+1$,$j=r$,}\\
		\beta/\delta & \text{if $i=r+1$,$j=r+1$,}\\
		1&\text{$i=j\ne r$ or $i=j\ne r+1$}\\
		0& \text{otherwise}.
		\end{cases}
		$$
	\end{algorithmic}
\end{algorithm}

Now, we are ready to give the following description of the PSLQ algorithm. Note that we suppose that  $\bm \alpha\in\real^n$ has integer relations. In fact, this hypothesis is reasonable, because Babai, Just and Meyer auf der Heide \cite{BabaiJustMeyeraufderHeide1988} showed that under the exact real arithmetic computation model, it is not possible to decide whether there exists a relation for given input $\bm \alpha\in\real^n$. In addition, we omit an early termination condition that checks whether there exists a column of $\bm B$ that is an integer relation, because it does not impact  the  analysis for the worst case.

\begin{algorithm}[H]
	\caption{(\texttt{PSLQ})}
	\label{algo:PSLQ}
	\begin{algorithmic}[1]
		\REQUIRE An $n$-dimensional vector $\bm \alpha=(\alpha_1,\cdots,\alpha_n)$ with $\norm{\alpha}=1$ (suppose that $\bm \alpha$ has integer relations) and  $\gamma>2/\sqrt{3}$.
		\ENSURE An integer relation $\bm m$ for $\bm \alpha$.
		\STATE Construct $\bm{H}_{\alpha}$ as in formula (\ref{Halpha}). Set $\bm{H}:=\bm{H}_{\alpha}$. Set the $n\times n$ matrices $\bm{A}$ and $\bm{B}$ to the identity matrix $\bm{I}_n$. Let $\bm D := \texttt{SizeReduce}(\bm H)$. Update $\bm\alpha:=\bm\alpha \bm D^{-1}$, $\bm{H}:=\bm{DH}$, $\bm{A}:=\bm{DA}$, and $\bm{B}:=\bm{BD^{-1}}$.
		\WHILE{$h_{n,n-1}\neq0$}\label{algstp:term}
		\STATE\label{stp:bswap} Let $(\bm D,\, r) := \texttt{BergmanSwap}(\bm H, \gamma)$, where $\bm D$ is the transform matrix and $r$ is the exchange position. Update $\bm\alpha:=\bm\alpha \bm D^{-1}$, $\bm{H}:=\bm{DH}$, $\bm{A}:=\bm{DA}$, and $\bm{B}:=\bm{BD^{-1}}$.
		\IF {$r<n-1$}
		\STATE\label{stp:corn} Let $\bm Q=\texttt{Corner}(\bm H)$ and update $\bm H:= \bm H\bm Q$.
		\ENDIF
\STATE\label{stp:sr} Let $\bm D := \texttt{SizeReduce}(\bm H)$. Update $\bm\alpha:=\bm\alpha \bm D^{-1}$, $\bm{H}:=\bm{DH}$, $\bm{A}:=\bm{DA}$, and $\bm{B}:=\bm{BD^{-1}}$.
\ENDWHILE
\STATE Return the $(n-1)$-th column of $\bm{B}$.
	\end{algorithmic}
\end{algorithm}

\begin{remark}
	At the beginning, the hyperplane matrix $\bm H_\alpha$ has all diagonal elements nonzero. During the algorithm, all diagonal elements of $\bm H$ are always  nonzero till the termination of $\PSLQ$.
\end{remark}

For the convenient of description, the procedure from step \ref{stp:bswap} to step \ref{stp:sr} in Algorithm \ref{algo:PSLQ} is called an iteration of \PSLQ\, as in \cite[Section 3]{FergusonBaileyArno1999}. 

\begin{theorem}[{\cite[Theorem 2]{FergusonBaileyArno1999}}]
	Assume that $\bm \alpha\in\real^n$ has integer relations. Let $\lambda_\alpha$ be the least $2$-norm of relations for $\bm \alpha$. Then $\PSLQ$ will find an integer relation for $\bm \alpha$ in no more than
	\[
	\binom{n}{2}\frac{\log\left(\gamma^{n-1}\lambda_\alpha)\right)}{\log\tau}
	\]
	iterations, where $\tau = 1/\sqrt{1/\rho^2+1/\gamma^2}$ with $\gamma>2/\sqrt{3}$ and $\rho=2$.
\end{theorem}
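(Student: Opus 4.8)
The plan is to run the classical potential-function argument used for lattice-reduction-type algorithms: exhibit one scalar quantity built from the diagonal entries $h_{j,j}$ of the current hyperplane matrix $\bm H$ that (i) decreases by a fixed multiplicative factor at every iteration, (ii) is bounded above at the start, and (iii) stays bounded below throughout; taking logarithms then produces the iteration count. I would work with
\[
\Pi(\bm H) \;=\; \prod_{j=1}^{n-1}|h_{j,j}|^{\,n-j} \;=\; \prod_{k=1}^{n-1}\Big(\prod_{j=1}^{k}|h_{j,j}|\Big),
\]
a product of prefix products whose weights $w_j=n-j$ are strictly decreasing in $j$. Since $\sum_{j=1}^{n-1}(n-j)=\sum_{k=1}^{n-1}k=\binom{n}{2}$, which is exactly the combinatorial factor in the theorem, I expect the three pieces to combine to give precisely $\binom{n}{2}\log(\gamma^{n-1}\lambda_\alpha)/\log\tau$.

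For the per-iteration decrease I would analyze a single iteration with the notation $\eta=h_{r,r}$, $\beta=h_{r+1,r}$, $\lambda=h_{r+1,r+1}$, $\delta=\sqrt{\beta^2+\lambda^2}$ of \eqref{eq:beta}. Size reduction leaves every diagonal entry unchanged, and the \texttt{Corner} formula in Algorithm~\ref{algo:crnr} touches only columns $r,r+1$, so the only diagonals altered in an iteration sit in positions $r$ and $r+1$; applying $\bm Q$ gives new values $|h_{r,r}|=\delta$ and $|h_{r+1,r+1}|=|\eta\lambda|/\delta$, whence the product $|h_{r,r}h_{r+1,r+1}|=|\eta\lambda|$ is preserved. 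Now size-reducedness gives $|\beta|\le|\eta|/\rho$ with $\rho=2$, while the maximality in \texttt{BergmanSwap} gives $\gamma^r|\eta|\ge\gamma^{r+1}|\lambda|$, i.e. $|\lambda|\le|\eta|/\gamma$; hence $\delta^2\le\eta^2(1/\rho^2+1/\gamma^2)=\eta^2/\tau^2$, so $|h_{r,r}|$ shrinks by at least the factor $\tau>1$ (positivity of $\log\tau$ following from $\gamma>2/\sqrt 3$). Because the weight at position $r$ exceeds that at $r+1$ by exactly one, the preserved product together with this shrinkage forces $\Pi$ down by a factor $\le 1/\tau$ at an interior swap $r<n-1$; for the boundary swap $r=n-1$ the entry $h_{n-1,n-1}$ is simply replaced by $\beta$ with $|\beta|\le|\eta|/\rho\le|\eta|/\tau$ (using $\tau<\rho$), so $\Pi$ again drops by at least $1/\tau$. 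Thus $\Pi_k\le\Pi_0\,\tau^{-k}$.

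The two remaining bounds close the estimate. The upper bound is immediate from $\norm{\bm\alpha}=1$: since $s_{j+1}\le s_j$ we have $|h_{j,j}^{(0)}|=s_{j+1}/s_j\le1$, and the opening size reduction does not change the diagonal, so $\Pi_0\le1$. The lower bound is where the input enters: I would first prove a relation bound asserting that for every integer relation $\bm m$ of $\bm\alpha$ one has $\max_j|h_{j,j}|\ge1/\norm{\bm m}$, hence $\max_j|h_{j,j}|\ge1/\lambda_\alpha$ at every iteration. The crux is then to upgrade this single-coordinate bound to a uniform bound on each diagonal, $|h_{j,j}|\ge(\gamma^{n-1}\lambda_\alpha)^{-1}$ (equivalently a prefix bound $\prod_{j\le k}|h_{j,j}|\ge(\gamma^{n-1}\lambda_\alpha)^{-k}$); granting this, $\Pi\ge(\gamma^{n-1}\lambda_\alpha)^{-\binom{n}{2}}$ throughout, and combining with $\Pi_k\le\tau^{-k}$ yields $\tau^{k}\le(\gamma^{n-1}\lambda_\alpha)^{\binom{n}{2}}$, i.e. the asserted iteration count.

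I expect the third ingredient — passing from ``the largest $|h_{j,j}|$ is at least $1/\lambda_\alpha$'' to ``no $|h_{j,j}|$ drops below $(\gamma^{n-1}\lambda_\alpha)^{-1}$'' — to be the main obstacle. The relation bound itself I would get by writing a least relation as $\bm m=\bm H_\alpha\bm t$ (with $\norm{\bm t}=\norm{\bm m}$, since $\bm H_\alpha$ is an isometry onto $\bm\alpha^\perp$), pushing it to the current coordinates as the nonzero integer vector $\bm A\bm m=\bm H\bm Q^{T}\bm t$, and exploiting the lower-trapezoidal structure of $\bm H$ together with $\norm{\bm A\bm m}_\infty\ge1$. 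The spread control then must use that the weights $\gamma^{j}$ in \texttt{BergmanSwap} are chosen precisely so that the weighted diagonals $\gamma^{j}|h_{j,j}|$ cannot separate by more than a factor $\gamma^{n-1}$ across the $n-1$ positions; making this quantitative, and verifying that it survives both size reduction and the corner rotation, is the delicate and genuinely algorithm-specific part of the proof.
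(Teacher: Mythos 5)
Your unclamped potential $\Pi(\bm H)=\prod_{j=1}^{n-1}\abs{h_{j,j}}^{\,n-j}$ does have the first two properties you claim: an interior swap changes only the factors at positions $r,r+1$, preserves $\abs{h_{r,r}h_{r+1,r+1}}=\abs{\eta\lambda}$, and multiplies $\Pi$ by $\delta/\abs{\eta}\le\sqrt{1/\rho^2+1/\gamma^2}=1/\tau$ (using $\abs{\beta}\le\abs{\eta}/\rho$ from size reduction and $\abs{\lambda}\le\abs{\eta}/\gamma$ from the Bergman choice), a boundary swap multiplies it by $\abs{\beta}/\abs{\eta}\le 1/\rho\le 1/\tau$, and $\Pi_0\le 1$ since each $s_{j+1}/s_j\le 1$. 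The genuine gap is exactly the step you flag as ``the main obstacle,'' and the route you propose for it cannot be made to work: $\PSLQ$ does \emph{not} maintain the invariant $\abs{h_{j,j}}\ge(\gamma^{n-1}\lambda_\alpha)^{-1}$, nor even $\abs{h_{j,j}}\ge h_{\max}/\gamma^{n-1}$. The maximality condition $\gamma^r\abs{h_{r,r}}\ge\gamma^j\abs{h_{j,j}}$ constrains only the \emph{selected} row at the \emph{selection} time, and it is destroyed by the very swap it triggers: the new diagonal entries are $\delta=\sqrt{\beta^2+\lambda^2}$ and $\abs{\eta\lambda}/\delta$, and since size reduction gives only an upper bound on $\abs{\beta}$ (it may well be near $0$), the new $\abs{h_{r,r}}=\delta$ can be as small as $\abs{\lambda}$, i.e.\ far below $h_{\max}/\gamma^{n-1}$. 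So no uniform per-diagonal lower bound, and hence no lower bound on your unclamped $\Pi$, can be extracted from the algorithm; the only quantity with a guaranteed floor is $h_{\max}(k)\ge 1/\lambda_\alpha$, which is \cite[Theorem 1]{FergusonBaileyArno1999}.

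The standard repair --- the one used by Ferguson--Bailey--Arno and reproduced in Appendix \ref{app:term} of this paper for $\PSLQ_\varepsilon$ --- is to modify the potential rather than to seek a stronger algorithmic invariant: clamp each factor, setting $\Pi(k)=\prod_{j=1}^{n-1}\max\bigl(\abs{h_{j,j}(k)},\,h_{\max}(k)/\gamma^{n-1}\bigr)^{n-j}$. With the clamping, the lower bound $\Pi(k)\ge\bigl(h_{\max}(k)/\gamma^{n-1}\bigr)^{n(n-1)/2}\ge(\gamma^{n-1}\lambda_\alpha)^{-\binom{n}{2}}$ is immediate, and all of the work moves into showing that the clamped product still drops by a factor $\tau$ at every iteration. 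That step is no longer your two-line computation: it requires the auxiliary monotonicity fact that $h_{\max}(k)$ is non-increasing, plus a case analysis according to whether the affected diagonal entries lie above or below the floor $h_{\max}/\gamma^{n-1}$ (the estimates for $A$, $B$ and $AB$ in Appendix \ref{app:term}); combining the resulting inequality \eqref{eq:alg2} with $1/h_{\max}(k)\le\lambda_\alpha$ then gives the stated bound. So your outline identifies the right quantities and the right combinatorial weight $\binom{n}{2}$, but the proof cannot be closed along the path you sketch; replacing the pure product by the clamped one, and proving the monotonicity of $h_{\max}$, is the missing idea.
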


\section[PSLQ eps]{The {$\PSLQ_\varepsilon$} Algorithm}\label{sec:invar}

The termination of \texttt{PSLQ} requires one to check whether $h_{n,n-1}=0$. When the input data $\bm \alpha$ with integer relations are exact, it will hold that $h_{n,n-1}=0$ after finitely many iterations of \PSLQ. And hence the output is an integer relation for  $\bm \alpha$.  However, when the input data $\bar{\bm \alpha}$ is an approximation of  $\bm \alpha$, there may not exist any integer relation for  $\bar{\bm \alpha}$. So $h_{n,n-1}$ is usually not equal to zero. This leads to non-termination of $\PSLQ$. Therefore, we first need to explore the termination condition of $\PSLQ$ for empirical data.

\subsection{An Invariant Relation of PSLQ}
Indeed, the quantity $h_{n,n-1}$ plays a very important role in $\PSLQ$, not only for exact data, but also for empirical data. The following theorem gives  a relationship between the $(n-1)$-th column of $\bm B$ ($=\bm A^{-1}$) in \texttt{PSLQ} and $h_{n,n-1}$, which will be shown to be crucial for the study of termination of $\PSLQ$ with empirical data.

Denote by $\bm{H}(k)$ the end result of $\bm{H}$ after exactly $k$ iterations of $\PSLQ$ .

\begin{theorem}\label{lem:zn-1}
	Assume that $\bm{H}(k)=\bm{AH}_{\alpha}\bm{Q}$, where $\bm{H}(k)=(h_{i,j}(k))$ is a lower trapezoidal matrix. Set $(z_1(k),\cdots,z_{n-1}(k),z_{n}(k))=(\alpha_1,\cdots,\alpha_{n-1},\alpha_n)\bm{A}^{-1}$. Then, it holds that
	$$|z_{n-1}(k)|\le \sqrt{\alpha_{n-1}^2+\alpha_n^2}|h_{n,n-1}(k)|. $$
\end{theorem}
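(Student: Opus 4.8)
The plan is to isolate the one clean algebraic fact behind the statement — that the hyperplane matrix annihilates $\bm\alpha$ — reduce the inequality to a two–dimensional estimate in which the constant $\sqrt{\alpha_{n-1}^2+\alpha_n^2}$ appears intrinsically, and then control that estimate through the $\PSLQ$ iteration. First I would record the identity $\bm\alpha\bm H_\alpha=\bm 0$. This is immediate from \eqref{Halpha}: in column $j$ the nonzero entries are $h_{j,j}=s_{j+1}/s_j$ and $h_{i,j}=-\alpha_i\alpha_j/(s_js_{j+1})$ for $i>j$, so $\sum_i\alpha_ih_{i,j}=\alpha_js_{j+1}/s_j-(\alpha_j/(s_js_{j+1}))\sum_{i>j}\alpha_i^2=0$ because $\sum_{i>j}\alpha_i^2=s_{j+1}^2$. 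Since $\bm H(k)=\bm A\bm H_\alpha\bm Q$ and $\bm z(k)=\bm\alpha\bm A^{-1}$, this gives at once
\[
\bm z(k)\bm H(k)=\bm\alpha\bm A^{-1}\bm A\bm H_\alpha\bm Q=(\bm\alpha\bm H_\alpha)\bm Q=\bm 0 .
\]
Reading off the last column and using that $\bm H(k)$ is lower trapezoidal ($h_{i,n-1}(k)=0$ for $i<n-1$) yields the scalar identity $z_{n-1}(k)\,h_{n-1,n-1}(k)+z_n(k)\,h_{n,n-1}(k)=0$. Because $h_{n-1,n-1}(k)\neq0$ throughout (Remark), the case $h_{n,n-1}(k)=0$ forces $z_{n-1}(k)=0$ and the inequality is trivial; so I may assume $h_{n,n-1}(k)\neq0$.

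Next I would give this a geometric reading. Let $\bm b^{n-1},\bm b^n$ be the last two columns of $\bm B=\bm A^{-1}$ and $\bm P=\bm H_\alpha\bm Q$, whose columns are orthonormal and span $\bm\alpha^{\perp}$. The last column of $\bm H(k)=\bm A\bm P$ is $\bm A\bm p^{n-1}=h_{n-1,n-1}(k)\bm e_{n-1}+h_{n,n-1}(k)\bm e_n$, so $\bm p^{n-1}=h_{n-1,n-1}(k)\bm b^{n-1}+h_{n,n-1}(k)\bm b^n$ is a \emph{unit} vector, orthogonal to $\bm\alpha$, lying in the plane $\Pi=\mathrm{span}(\bm b^{n-1},\bm b^n)$. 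Writing $\bm q$ for the unit vector of $\Pi$ orthogonal to $\bm p^{n-1}$ and using $z_i(k)=\inner{\bm\alpha,\bm b^i}$ together with $\mathrm{proj}_\Pi\bm\alpha=\norm{\mathrm{proj}_\Pi\bm\alpha}\,\bm q$, one computes $z_{n-1}(k)=\norm{\mathrm{proj}_\Pi\bm\alpha}\,\inner{\bm q,\bm b^{n-1}}$ and $\abs{h_{n,n-1}(k)}=\abs{\inner{\bm q,\bm b^{n-1}}}/\mathrm{Area}(\bm b^{n-1},\bm b^n)$. Hence the target inequality is exactly equivalent to the single estimate
\[
\norm{\mathrm{proj}_\Pi\bm\alpha}\cdot\mathrm{Area}(\bm b^{n-1},\bm b^n)\le\sqrt{\alpha_{n-1}^2+\alpha_n^2}=\norm{\mathrm{proj}_{\Pi_0}\bm\alpha},
\]
where $\Pi_0=\mathrm{span}(\bm e_{n-1},\bm e_n)$. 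At $k=0$ one has $\Pi=\Pi_0$ and unit area, so this holds with equality, a reassuring consistency check.

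The crux — and the step I expect to be by far the hardest — is to prove this last estimate. It is emphatically \emph{false} for an arbitrary unimodular $\bm A$: one can produce a $\bm B$ whose last two columns have area exceeding $1$ while $\norm{\mathrm{proj}_\Pi\bm\alpha}$ stays near $1$, so the proof must genuinely use that $\bm H(k)$ is produced by $\PSLQ$, and in particular the size–reduction bound $\abs{h_{n,n-1}(k)}\le\tfrac12\abs{h_{n-1,n-1}(k)}$ will be essential. I would therefore argue by induction on the iteration count $k$, tracking how $\Pi$, $\mathrm{proj}_\Pi\bm\alpha$, and $\mathrm{Area}(\bm b^{n-1},\bm b^n)$ change under the three constituent operations applied to $\bm B=\bm A^{-1}$. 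Size reduction alters $\bm b^{n-1}$ only by an integer multiple of $\bm b^n$ (a shear) and leaves $\bm b^n$ fixed, hence preserves both $\Pi$ and the area; \texttt{Corner} does not touch $\bm B$ at all; and a \texttt{BergmanSwap} permutes two columns of $\bm B$, affecting the last pair only for the exchange positions $r=n-1$ (which merely reorders them, so area and $\Pi$ are unchanged) and $r=n-2$.

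Thus the whole quantity $\norm{\mathrm{proj}_\Pi\bm\alpha}\cdot\mathrm{Area}$ is invariant except at swaps with $r=n-2$, where $\bm b^{n-1}$ is replaced by $\bm b^{n-2}$ and a new plane $\mathrm{span}(\bm b^{n-2},\bm b^n)$ appears. Verifying that this one substantive case cannot push the quantity above $s_{n-1}$ — by invoking the size–reduced diagonal ratios of $\bm H$ to bound the new area and projection against the current (inductively bounded) data — is where I expect the technical weight of the argument to lie; establishing the right monotone invariant for this exchange step is the main obstacle, and everything else is bookkeeping around the clean identity $\bm z(k)\bm H(k)=\bm 0$.
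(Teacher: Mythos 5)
Your setup is sound and in fact coincides with the paper's proof up to a change of language. The identity $\bm z(k)\bm H(k)=\bm 0$ and the scalar relation $z_{n-1}(k)h_{n-1,n-1}(k)+z_n(k)h_{n,n-1}(k)=0$ are exactly the paper's starting point, and your target quantity $\norm{\mathrm{proj}_\Pi\bm\alpha}\cdot\mathrm{Area}(\bm b^{n-1},\bm b^n)$ is precisely the paper's ratio $\abs{z_n(k)/h_{n-1,n-1}(k)}$: by the same wedge computation you used for $h_{n,n-1}$ one gets $\abs{h_{n-1,n-1}}=\abs{\inner{\bm q,\bm b^n}}/\mathrm{Area}$, while $\abs{z_n}=\norm{\mathrm{proj}_\Pi\bm\alpha}\,\abs{\inner{\bm q,\bm b^n}}$, so the two quantities agree. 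Your case analysis of which operations can change it — size reduction, \texttt{Corner}, swaps at $r<n-2$ and at $r=n-1$ leave it fixed; only a swap at $r=n-2$ is substantive — also matches the paper (the paper needs formula \eqref{equ:control} to dispose of $r=n-1$, whereas in your symmetric formulation that case is genuinely trivial, a minor gain).

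But the proposal stops exactly where the theorem's content lies: you never prove that the $r=n-2$ swap cannot increase the quantity; you only announce that this is ``the main obstacle.'' That is a genuine gap, not bookkeeping — without it nothing is proved, and as you yourself observe, the bound is false for arbitrary unimodular $\bm A$, so no soft argument can substitute for it. The paper closes it with a short computation for which you already have every ingredient: write $\eta=h_{n-2,n-2}(k)$, $\beta=h_{n-1,n-2}(k)$, $\lambda=h_{n-1,n-1}(k)$, $\delta=\sqrt{\beta^2+\lambda^2}$. After the swap and \texttt{Corner}, the new corner entry is $h_{n-1,n-1}(k+1)=-\eta\lambda/\delta$, while $z_n$ is untouched (the swap permutes columns $n-2$ and $n-1$ of $\bm B$, not column $n$). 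The Bergman choice of $r=n-2$ gives $\abs{\eta}\ge\gamma\abs{\lambda}$, and size reduction gives $\abs{\beta}\le\abs{\eta}/2$, whence
\[
\frac{\abs{\eta}}{\delta}=\frac{1}{\sqrt{\beta^2/\eta^2+\lambda^2/\eta^2}}\ge\frac{1}{\sqrt{1/4+1/\gamma^2}}=\tau>1 ,
\]
so $\abs{h_{n-1,n-1}(k+1)}\ge\tau\abs{h_{n-1,n-1}(k)}$ and your area-times-projection quantity shrinks by a factor at least $\tau$ at such a step. With this single estimate inserted, your induction closes (the base case you already verified gives the constant $\sqrt{\alpha_{n-1}^2+\alpha_n^2}$); without it, the proof is incomplete at its crux.
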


\begin{proof}
	From
	$$ (z_1(k),\cdots,z_{n-1}(k),z_n(k))\bm{H}(k) = \bm \alpha\bm{A}^{-1}\bm{A}\bm{H}_\alpha\bm{Q}=\bm \alpha\bm{H}_\alpha\bm{Q}=\bm 0$$
	it follows that
	\[z_{n-1}(k)h_{n-1,n-1}(k)+z_n(k)h_{n,n-1}(k)=0.\]
	From \cite[Lemma 5]{FergusonBaileyArno1999}, it holds that $h_{n-1,n-1}(k)\ne 0$ before termination of Algorithm \ref{algo:PSLQ}.  Then, it is obtained that
	\begin{equation}
	z_{n-1}(k)=-\frac{z_n(k)}{h_{n-1,n-1}(k)}h_{n,n-1}(k). \label{equ:control}
	\end{equation}
	
	We claim that $|\frac{z_n(k)}{h_{n-1,n-1}(k)}|$ does not increase as $k$ increases.
In Algorithm \ref{algo:PSLQ}, this quantity can be possibly changed only in \texttt{SizeReduce}, \texttt{BergmanSwap} and \texttt{Corner}, so we next consider them respectively.
When the size reduction (step \ref{stp:sr}) is performed on row $i\le n-1$ of $\bm{H}$, $z_n$ and $h_{n-1,n-1}$ are unchanged, so $|\frac{z_n}{h_{n-1,n-1}}|$ is unchanged. When $i=n$, the size reduction matrix is as follows
	$$\bm D=\begin{pmatrix}
	1&0&0&\cdots&0&0\\
	0&1&0&\cdots&0&0\\
	0&0&1&\cdots &0&0\\
	\vdots &\vdots &\vdots&\vdots&\vdots&\vdots\\
	0&0&0&\cdots &1&0\\
	k_1&k_2&k_3&\cdots& k_{n-1}&1
	\end{pmatrix}=\begin{pmatrix}
	\bm{I}_{n-1}&0\\
	\bm K&1
	\end{pmatrix},
	$$
	where $\bm K=(k_1,\cdots,k_{n-1})$ is an integer vector, and $\bm{I}_{n-1}$ is the $(n-1)\times (n-1)$ identify matrix. Its inverse is
	$$D^{-1}=\begin{pmatrix}
	1&0&0&\cdots&0&0\\
	0&1&0&\cdots&0&0\\
	0&0&1&\cdots &0&0\\
	\vdots &\vdots &\vdots&\vdots&\vdots&\vdots\\
	0&0&0&\cdots &1&0\\
	-k_1&-k_2&-k_3&\cdots& -k_{n-1}&1
	\end{pmatrix}=\begin{pmatrix}
	\bm{I}_{n-1}&0\\
	-\bm K&1
	\end{pmatrix}. $$
	It is easy to see that the $n$-th column of $\bm{A}^{-1}\bm{D}^{-1}$ is the same as that of $\bm{A}^{-1}$. Therefore, $z_n$ is unchanged. On the other hand, $h_{n-1,n-1}$ is also unchanged after size reduction. Hence, $\frac{z_n}{h_{n-1.n-1}}$ is unchanged.
	In step \ref{stp:bswap} of Algorithm \ref{algo:PSLQ},  the Bergman swap is performed between  the $r$-th and $(r+1)$-th rows. When $r<n-2$, it is obvious that $z_n$ and $h_{n-1,n-1}$ are unchanged.  When $r=n-2$, the columns $n-2$ and $n-1$ of $\bm{A}^{-1}$ are swapped. So the $n$-th column of $\bm{A}^{-1}$ is unchanged and $z_n$ is also unchanged,that is $z_n(k+1)=z_n(k)$, while $h_{n-1,n-1}$ is changed as follows. Before step \ref{stp:bswap}, let $\eta=h_{n-2,n-2}(k)$, $\beta=h_{n-1,n-2}(k)$, $\lambda=h_{n-1,n-1}(k)$ and $\delta=\sqrt{\beta^2+\lambda^2}$, then we have
	$$\begin{pmatrix}
	\eta &0\\
	\beta &\lambda
	\end{pmatrix}\xrightarrow{\text{step \ref{stp:bswap}}}\begin{pmatrix}
	\beta&\lambda\\
	\eta&0
	\end{pmatrix}\xrightarrow{\text{step \ref{stp:corn}}}\begin{pmatrix}
	\delta &0\\
	\frac{\eta\beta}{\delta}&-\frac{\eta\lambda}{\delta}
	\end{pmatrix}.
	$$
	Therefore, after step \ref{stp:corn}, the new $h_{n-1,n-1}(k+1)=-\frac{\eta\lambda}{\delta}$. Since the swap occurs at rows $n-2$ and $n-1$, it holds that $|\eta| >\gamma|\lambda|$. Note that $|\beta|<\frac{|\eta|}{\rho}$ yields
	$$\left|\frac{-\eta}{\delta}\right|=\frac{1}{\sqrt{\frac{\beta^2}{\eta^2}+\frac{\lambda^2}{\eta^2}}}
	>\frac{1}{\sqrt{\frac{1}{\rho^2}+\frac{1}{\gamma^2}}}=\tau,$$
where $\rho=2$. So, it follows that
$$|h_{n-1,n-1}(k+1)|=|-\frac{\eta\lambda}{\delta}|>\tau|\lambda|.$$
Hence, it holds that
$$\left|\frac{z_n(k+1)}{h_{n-1,n-1}(k+1)}\right|<\frac{z_n(k)}{\lambda}\frac{1}{\tau}
	=\frac{1}{\tau}\abs{\frac{z_n(k)}{h_{n-1,n-1}(k)}}.$$
	Since $\frac{1}{\tau}<1$, it implies that $\left|\frac{z_n}{h_{n-1,n-1}}\right|$  decreases.
	When $r=n-1$, rows $n-1$ and $n$ of $\bm{H}$ are swapped, so are columns  $n-1$ and $n$ of $\bm{A}^{-1}$. Hence $h_{n-1,n-1}$ and $h_{n,n-1}$ are swapped, and $z_{n-1}$ and $z_n$ are exchanged. Therefore, $h_{n-1,n-1}(k+1)=h_{n,n-1}(k)$ and $z_n(k+1)=z_{n-1}(k)$. From formula (\ref{equ:control}), it follows that
	\begin{equation*}
	z_n(k+1)=z_{n-1}(k)=-\frac{h_{n,n-1}(k)}{h_{n-1,n-1}(k)}z_n(k)=-\frac{h_{n-1,n-1}(k+1)}{h_{n-1,n-1}(k)}z_n(k).
	\end{equation*}
	In this case, $\left|\frac{z_n}{h_{n-1,n-1}}\right|$ remains unchanged.
	Up to now, we have shown that $|\frac{z_n}{h_{n-1,n-1}}|$ either decreases or remains unchanged after the $(k+1)$-th iteration of $\PSLQ$. At the beginning of $\PSLQ$, we have that $z_n(1)=\alpha_n$ and $h_{n-1,n-1}(k)=\frac{|\alpha_n|}{\sqrt{\alpha_{n-1}^2+\alpha_n^2}}$. Hence
	$$\left|\frac{z_n(k)}{h_{n-1,n-1}(k)}\right|\le
	\left|\frac{z_n(1)}{h_{n-1,n-1}(1)}\right|\le\sqrt{\alpha_{n-1}^2+\alpha_n^2},$$
	which completes the proof.
\end{proof}

The property presented in Theorem \ref{lem:zn-1} is an invariant of \texttt{PSLQ} in the sense that it always holds during the algorithm. Furthermore, Theorem \ref{lem:zn-1}  can be used to design an algorithm to find approximate integer relations in the following sense. Given $\bm\alpha$ which may not have an integer relation, if we take the $(n-1)$-th column of $\bm B$ as an approximate integer relation for $\bm\alpha$ in algorithm \ref{algo:PSLQ},  Theorem \ref{lem:zn-1} gives an error estimate, i.e., if \texttt{PSLQ} returns the $(n-1)$-th column of $\bm B$, denoted by $\bm m$, when  $\abs{h_{n,n-1}}< \varepsilon_2$, then
\[\abs{\inner{\bm \alpha,\bm m}}\le\sqrt{\alpha_{n-1}^2+\alpha_n^2}\,\varepsilon_2.\]
Now we improve the algorithm as follows.

\begin{algorithm}[H]
	\caption{(\texttt{PSLQ$_{\varepsilon}$})}
	\label{algo:nPSLQ}
	\begin{algorithmic}[1]
		\REQUIRE A lower trapezoidal matrix $\bm H\in\real^{n\times(n-1)}$ with all diagonal entries nonzero,  $\varepsilon_2>0$ and  $\gamma>2/\sqrt{3}$.
		
		

		\ENSURE An $n$-dimensional integer vector $\bm m$.
		\STATE Set the $n\times n$ matrices $\bm{A}$ and $\bm{B}$ to the identity matrix $\bm{I}_n$.
		 Let $\bm D := \texttt{SizeReduce}(\bm H)$. Update $\bm{H}:=\bm{DH}$, $\bm{A}:=\bm{DA}$, and $\bm{B}:=\bm{BD^{-1}}$.
		\WHILE{ $\abs{h_{n,n-1}}\ge\varepsilon_2$}\label{algstep:term}
		\STATE Let $(\bm D,\, r) := \texttt{BergmanSwap}(\bm H, \gamma)$, where $\bm D$ is the transform matrix and $r$ is the exchange position. Update $\bm{H}:=\bm{DH}$, $\bm{A}:=\bm{DA}$, and $\bm{B}:=\bm{BD^{-1}}$.
		\IF {$r<n-1$}
		\STATE Let $\bm Q=\texttt{Corner}(H)$ and update $\bm H:= \bm H\bm Q$.
		\ENDIF
\STATE Let $\bm D := \texttt{SizeReduce}(\bm H)$. Update $\bm{H}:=\bm{DH}$, $\bm{A}:=\bm{DA}$, and $\bm{B}:=\bm{BD^{-1}}$.
		\ENDWHILE
\STATE Return the $(n-1)$-th column of $\bm{B}$.
	\end{algorithmic}
\end{algorithm}

Besides the termination condition being replaced by $\abs{h_{n,n-1}}<\varepsilon_2$, the main difference of $\PSLQ_{\varepsilon}$ from  $\PSLQ$ is that the input is changed as a more general lower trapezoidal matrix which may not satisfy the fine structure in \eqref{Halpha}. The remainder of this section will be devoted to analyze $\PSLQ_{\varepsilon}$.

\subsection{Termination and Complexity}\label{sec:term}
We now show that $\PSLQ_{\varepsilon}$ terminates after finitely many number of iterations  stated in the following theorem.

\begin{theorem}\label{thm:term}
	Given $\bm{H}\in\real^{n\times(n-1)}$, $\PSLQ_{\varepsilon}$ terminates in no more than
	\[
	\frac{n(n+1)((n-1)\log\gamma + \log\frac{1}{\varepsilon_2})}{2\log\tau}
	\]
	iterations, where $\tau = 1/\sqrt{1/\rho^2+1/\gamma^2}$.
\end{theorem}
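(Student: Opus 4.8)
The plan is to track a weighted product of the diagonal entries of $\bm H$ as a strictly decreasing potential, exactly in the spirit of the proof of \cite[Theorem 2]{FergusonBaileyArno1999}, and to bracket it between an $\varepsilon_2$-dependent lower bound and a normalization-dependent upper bound. Concretely, set $\Pi(k)=\prod_{j=1}^{n-1}|h_{j,j}(k)|^{\,n-j}$. First I would show that each iteration multiplies $\Pi$ by a factor at most $1/\tau$, so that $\Pi(k)\le\tau^{-k}\Pi(0)$. This reduces to the $2\times 2$ corner transformation already computed in the proof of Theorem \ref{lem:zn-1}: for a swap at position $r<n-1$ the pair $(h_{r,r},h_{r+1,r+1})=(\eta,\lambda)$ becomes $(\delta,\eta\lambda/\delta)$ with $\delta=\sqrt{\beta^2+\lambda^2}$, and since the weights $n-r$ and $n-r-1$ differ by one, the induced factor is $\delta/|\eta|$; size reduction gives $|\beta|\le|\eta|/2=|\eta|/\rho$ and the swap criterion gives $|\lambda|\le|\eta|/\gamma$, whence $\delta/|\eta|\le\sqrt{1/\rho^2+1/\gamma^2}=1/\tau$. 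For a swap at $r=n-1$ no corner is applied, only $h_{n-1,n-1}$ (of weight one) changes, from $\lambda$ to the size-reduced $\beta=h_{n,n-1}$ with $|\beta|\le|\lambda|/2$, and since $\tau<\rho=2$ the factor $|\beta|/|\lambda|\le 1/2\le 1/\tau$ as well. Hence $\Pi$ contracts geometrically at every iteration.

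Next I would bracket $\Pi$. For the upper bound I would invoke the input normalization (the diagonal entries of a hyperplane matrix of a unit vector satisfy $|h_{j,j}|\le1$, and size reduction leaves the diagonal unchanged), giving $\Pi(0)\le1$. For the lower bound I would exploit the termination guard: as long as the loop has not exited, $|h_{n,n-1}(k)|\ge\varepsilon_2$, and size reduction forces $|h_{n-1,n-1}(k)|\ge2|h_{n,n-1}(k)|\ge2\varepsilon_2$. The goal is then to transfer this single lower bound to all diagonal entries and obtain $\Pi(k)\ge(\varepsilon_2/\gamma^{\,n-1})^{n(n+1)/2}$, where the factor $\gamma^{n-1}$ records the worst-case imbalance tolerated by the weighted swap criterion and the exponent is the accumulated weight, inflated by the $\gamma$-losses incurred in the propagation.

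Given the bracketing, the conclusion is immediate: combining $\Pi(N)\le\tau^{-N}\Pi(0)\le\tau^{-N}$ with the lower bound yields $\tau^{N}\le(\gamma^{\,n-1}/\varepsilon_2)^{n(n+1)/2}$, and taking logarithms (recall $\log\tau>0$) gives the stated count $N\le\frac{n(n+1)((n-1)\log\gamma+\log(1/\varepsilon_2))}{2\log\tau}$. The main obstacle is the per-entry lower bound of the previous paragraph: the Bergman swap only controls the \emph{largest} weighted quantity $\gamma^j|h_{j,j}|$, so a small diagonal entry at an early index is not directly prevented, and I expect to need an auxiliary structural lemma showing that the swap strategy keeps the quantities $\gamma^j|h_{j,j}|$ from becoming too unbalanced, so that $|h_{n-1,n-1}|\ge2\varepsilon_2$ transfers, up to powers of $\gamma$, to every $|h_{j,j}|$ before termination. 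Establishing this equilibration, and thereby pinning down the exponent $n(n+1)/2$ rather than the naive total weight $n(n-1)/2$, is where the real work lies; the monovariant contraction and the closing logarithmic bookkeeping are routine.
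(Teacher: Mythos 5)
Your potential-function framework and the per-iteration contraction are correct as far as they go, and they match the paper's strategy in spirit: the $2\times2$ corner computation, the factor $\delta/|\eta|\le\sqrt{1/\rho^2+1/\gamma^2}=1/\tau$ for a swap at $r<n-1$, and the factor $\le 1/\rho\le 1/\tau$ for $r=n-1$ are all valid for your naive potential $\Pi(k)=\prod_{j}|h_{j,j}(k)|^{n-j}$. But the proof is not complete, and the missing piece is exactly the step you defer to an "auxiliary structural lemma": the lower bound on $\Pi(k)$ before termination. That equilibration lemma is false in general. The Bergman swap only ever selects the index maximizing $\gamma^{j}|h_{j,j}|$, so a diagonal entry that is tiny --- possibly already at $k=0$, e.g. for a hyperplane matrix built from a vector whose trailing coordinates are very small, or for a general input $\bm H\in\real^{n\times(n-1)}$ as the theorem allows --- is simply never selected, is untouched by swaps elsewhere, and can persist far below $\varepsilon_2/\gamma^{n-1}$ all the way to termination. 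The loop guard $|h_{n,n-1}|\ge\varepsilon_2$ propagates (via size reduction) only to $|h_{n-1,n-1}|$, not to the other diagonal entries. Consequently $\prod_j|h_{j,j}(k)|^{n-j}$ admits no lower bound of the form $(\varepsilon_2/\gamma^{n-1})^{\mathrm{const}}$, and your bracketing collapses.

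The paper's proof avoids needing any per-entry bound by building the floor into the potential itself: it defines $\Pi(k)=\prod_{j=1}^{n-1}\max\bigl(|h_{j,j}(k)|,\,h_{\max}(k)/\gamma^{n-1}\bigr)^{n-j}$, where $h_{\max}(k)=\max_i|h_{i,i}(k)|$. With this definition the lower bound is trivial, $\Pi(k)\ge\bigl(h_{\max}(k)/\gamma^{n-1}\bigr)^{n(n-1)/2}$ together with $h_{\max}(k)>|h_{n,n-1}(k)|\ge\varepsilon_2$ before termination, and all of the real work moves to the step your proposal labels routine: proving the per-iteration contraction $\Pi(k)/\Pi(k+1)\ge\tau$ now requires first showing that $h_{\max}(k)\ge h_{\max}(k+1)$ (so the floors are monotone) and then a case analysis according to whether the affected entries lie above or below the floor (the quantities $A$ and $B$ with $AB\ge1$ in Appendix B). That redefinition of $\Pi$ is the key idea you are missing, and without it (or an equivalent device) the stated $\bm H$-independent bound cannot be reached. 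Two smaller remarks: your concern about forcing the exponent $n(n+1)/2$ is unnecessary, since the paper's appendix in fact proves the stronger bound with $n(n-1)$, which implies the theorem as stated; and your upper bound $\Pi(0)\le1$ needs the normalization $h_{\max}(0)\le1$, an assumption the theorem does not make explicit but which the paper's proof uses equally implicitly, so this is not a point of difference.
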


\begin{proof}
	Define the $\Pi$ function after $k$ iterations as follows
	\[
	\Pi(k) = \prod_{j=1}^{n-1}\max\left(\abs{h_{i,i}(k)}, \frac{h_{\max}(k)}{\gamma^{n-1}}\right)^{n-j},
	\]
	where $h_{\max}(k)$ is the maximum of $\abs{h_{i,i}(k)}$ for $i=1,2,\cdots, n-1$. Then the proof is similar to the proof of \cite[Theorem 2]{FergusonBaileyArno1999}. See Appendix \ref{app:term} for the full proof.
\end{proof}

Note that if $\bm H$ is the hyperplane matrix for  $\bm\alpha\in\real^n$ and $\bm \alpha$ has an integer relation, let $M_{\alpha}$ be the minimal $2$-norm of integer relations for $\bm\alpha$. Then from  \cite[Theorem 1]{FergusonBaileyArno1999}, it holds that $\frac{1}{h_{\max}(k)}\le M_{\alpha}$. From inequality \eqref{eq:alg2}, it can be obtained that
\[
k\le \frac{n(n-1)((n-1)\log\gamma + \log\frac{1}{h_{\max}(k)})}{2\log\tau}\le \frac{n(n-1)((n-1)\log\gamma + \log M_{\alpha})}{2\log\tau},
\]
which is the same as \cite[Theorem 2]{FergusonBaileyArno1999}.

\subsection[Perturbation Analysis]{Perturbation Analysis of $\PSLQ_{\varepsilon}$}\label{sec:perturb}

Before we present the technical details, we recall some notations. For the intrinsic data $\bm \alpha$, we assume that we can only obtain  the corresponding empirical data $\bar{\bm \alpha}$ with $\norm{\bm \alpha - \bar{\bm \alpha}}_2<\varepsilon_1$. For $\bar{\bm \alpha}$, we can construct its hyperplane matrix ${\bm H_{\bar{\alpha}}}$ as in \eqref{Halpha}. But we can not use ${\bm H_{\bar{\alpha}}}$ as the input matrix for $\PSLQ_{\varepsilon}$. Instead, we use $\overline{\bm{H}}_\alpha$ to represent a more general perturbation to $\bm H_\alpha$ including round-off errors in computing $\bm H_{\bar{\bm \alpha}}$, which only keeps the lower trapezoidal structure and satifies
\begin{equation}\label{eq:e3}
\|\overline{\bm{H}}_\alpha-\bm{H}_\alpha\|_F\le \varepsilon_3,
\end{equation}
where $\norm{\cdot}_F$ is the matrix Frobenius norm.
Suppose that one wants to find an integer relation for $\bm \alpha\in\real^n$ by using $\PSLQ_\varepsilon$, where the input is $\overline{\bm{H}}_\alpha$, the termination condition is $\abs{h_{n,n-1}}< \varepsilon_2$ and the output is $\bm m$.
Next, we  investigate the relations among $\varepsilon_2$, $\varepsilon_3$ and $\abs{\inner{\bm m,\bm\alpha}}$.

Denote by $\bm H_{[1..n-1]}$  the submatrix of $\bm H_\alpha$ that consists of the first $n-1$ rows and the first $n-1$ columns.  It follows from \eqref{eq:e3} that $\|\overline{\bm{H}}_{[1..n-1]}-\bm{H}_{[1..n-1]}\|_F\le \varepsilon_3$.

First, we give  explicit formulae for the F-norm of $\bm{H}_{[1..n-1]}$ and $\bm{H}_{[1..n-1]}^{-1}$; see Appendix \ref{sec:condnum} for the proof.

\begin{lemma}\label{lem:normH}
	Let the notations be as above. Then
	\[
	\begin{split}
	\|\bm{H}_{[1..n-1]}^{-1}\|_F^2&=(n-2)+\frac{\|\bm \alpha\|^2}{\alpha_n^2},\\
	\|\bm{H}_{[1..n-1]}\|_F^2&=(n-2)+\frac{\alpha_n^2}{\|\bm \alpha\|^2}.
	\end{split}
	\]
\end{lemma}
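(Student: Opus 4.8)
The plan is to avoid inverting the triangular matrix $\bm{H}_{[1..n-1]}$ entrywise and instead exploit the fact that the \emph{full} $n\times(n-1)$ hyperplane matrix has orthonormal columns, reducing both norms to a rank-one perturbation of the identity. First I would record (or re-derive directly from \eqref{Halpha}) the identity $\bm H_\alpha^T\bm H_\alpha=\bm I_{n-1}$. Using $s_j^2-s_{j+1}^2=\alpha_j^2$ and $\sum_{i=j+1}^{n}\alpha_i^2=s_{j+1}^2$, the squared norm of column $j$ collapses to $(s_{j+1}^2+\alpha_j^2)/s_j^2=1$, and for $j<k$ the diagonal term at $i=k$ exactly cancels the sum of the strictly-lower terms at $i>k$, giving orthogonality; this is also available from \cite{FergusonBaileyArno1999}.

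Next, writing $\bm H_\alpha=\left(\begin{smallmatrix}\bm H_{[1..n-1]}\\ \bm r\end{smallmatrix}\right)$, where $\bm r=(h_{n,1},\dots,h_{n,n-1})$ is the last row, orthonormality of the columns gives $\bm H_{[1..n-1]}^T\bm H_{[1..n-1]}=\bm I_{n-1}-\bm r^T\bm r$. Taking the trace immediately yields $\norm{\bm H_{[1..n-1]}}_F^2=(n-1)-\norm{\bm r}^2$. For the inverse I would use $\norm{\bm H_{[1..n-1]}^{-1}}_F^2=\operatorname{tr}\bigl((\bm H_{[1..n-1]}^T\bm H_{[1..n-1]})^{-1}\bigr)=\operatorname{tr}\bigl((\bm I_{n-1}-\bm r^T\bm r)^{-1}\bigr)$ and apply the Sherman--Morrison formula, $(\bm I_{n-1}-\bm r^T\bm r)^{-1}=\bm I_{n-1}+(1-\norm{\bm r}^2)^{-1}\bm r^T\bm r$, whose trace is $(n-1)+\norm{\bm r}^2/(1-\norm{\bm r}^2)$. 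This step needs $\norm{\bm r}<1$, which I confirm below (it follows from $\alpha_n\neq 0$).

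The only remaining computation is $\norm{\bm r}^2$, and the key is the telescoping identity $\alpha_j^2/(s_j^2 s_{j+1}^2)=1/s_{j+1}^2-1/s_j^2$, itself a consequence of $\alpha_j^2=s_j^2-s_{j+1}^2$. Summing,
\[
\norm{\bm r}^2=\alpha_n^2\sum_{j=1}^{n-1}\frac{\alpha_j^2}{s_j^2 s_{j+1}^2}
=\alpha_n^2\left(\frac{1}{s_n^2}-\frac{1}{s_1^2}\right)
=1-\frac{\alpha_n^2}{\norm{\bm\alpha}^2},
\]
using $s_n^2=\alpha_n^2$ and $s_1^2=\norm{\bm\alpha}^2$; in particular $1-\norm{\bm r}^2=\alpha_n^2/\norm{\bm\alpha}^2>0$, which justifies the inversion above. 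Substituting into the two trace expressions gives $\norm{\bm H_{[1..n-1]}}_F^2=(n-1)-\bigl(1-\alpha_n^2/\norm{\bm\alpha}^2\bigr)=(n-2)+\alpha_n^2/\norm{\bm\alpha}^2$ and $\norm{\bm H_{[1..n-1]}^{-1}}_F^2=(n-1)+\bigl(\norm{\bm\alpha}^2/\alpha_n^2-1\bigr)=(n-2)+\norm{\bm\alpha}^2/\alpha_n^2$, as claimed.

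I expect the main obstacle to be purely organizational rather than conceptual: once the orthonormality of $\bm H_\alpha$ and the value of $\norm{\bm r}^2$ are in hand, both formulae fall out of a single rank-one structure, so the care goes into the telescoping sum and into noting invertibility of $\bm I_{n-1}-\bm r^T\bm r$. A purely elementary route also works for $\norm{\bm H_{[1..n-1]}}_F^2$ (sum the diagonal squares $(s_{i+1}/s_i)^2$ and the below-diagonal squares over the triangle and telescope), but the entrywise inverse of the triangular factor needed for the second identity is considerably messier, so I would favour the Sherman--Morrison approach for both.
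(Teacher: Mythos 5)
Your proposal is correct, and for the harder half of the lemma it takes a genuinely different route from the paper. The paper's proof (Appendix A) handles $\norm{\bm H_{[1..n-1]}}_F^2$ essentially the way you do — it subtracts the squared norm of the last row of $\bm H_\alpha$ from $\norm{\bm H_\alpha}_F^2 = n-1$ using the same telescoping identity $\alpha_j^2/(s_j^2s_{j+1}^2) = 1/s_{j+1}^2 - 1/s_j^2$ — but for the inverse it writes down the explicit entrywise inverse of the triangular factor, $\bigl(\bm H^{-1}_{[1..n-1]}\bigr)_{ij} = \alpha_j\alpha_i/(s_is_{i+1})$ for $i>j$ with diagonal $s_i/s_{i+1}$ (asserted ``by linear algebra'' without verification), and then sums its column norms, again by telescoping, to get $\norm{H_j^{-1}}^2 = 1+\alpha_j^2/\alpha_n^2$. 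You instead never touch the inverse's entries: from $\bm H_\alpha^T\bm H_\alpha=\bm I_{n-1}$ you get the rank-one structure $\bm H_{[1..n-1]}^T\bm H_{[1..n-1]}=\bm I_{n-1}-\bm r^T\bm r$, and then $\norm{\bm H_{[1..n-1]}^{-1}}_F^2=\operatorname{tr}\bigl((\bm I_{n-1}-\bm r^T\bm r)^{-1}\bigr)$ falls out of Sherman--Morrison once $\norm{\bm r}^2 = 1-\alpha_n^2/\norm{\bm\alpha}^2$ is computed (by the very same telescoping sum). The trade-off: the paper's route yields the explicit inverse matrix, which is information in its own right, but leaves its verification to the reader; your route is shorter, replaces that unverified computation by the standard orthonormality of the hyperplane matrix plus a rank-one inversion, derives both formulae from a single identity, and gets invertibility of $\bm H_{[1..n-1]}$ (equivalently $\norm{\bm r}<1$) for free from $\alpha_n\neq 0$.
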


The following lemma enables us to give an estimation on
$\norm{\overline{\bm{H}}_{[1..n-1]}^{-1}}_F$.

\begin{lemma}[{\cite[Theorem 2.3.4]{GolubvanLoan2013}}]\label{lem:matnorm}
	Let $\bm{A}$ be a nonsingular matrix with  perturbation $\bm{E}$. Let $\|.\|$ denote any matrix norm satisfying inequality $\|\bm B\bm C\|\le \|\bm B\|\|\bm C\|$ for any matrices $\bm B$ and $\bm C$. If  $\|\bm{EA}^{-1}\|<1$, then $\bm{A}+\bm{E}$ is nonsingular, and it holds that 
	$$\|(\bm{A}+\bm{E})^{-1}-\bm{A}^{-1}\|\le \frac{\|\bm{EA}^{-1}\|}{1-\|\bm{EA}^{-1}\|}\|\bm{A}^{-1}\|.$$
\end{lemma}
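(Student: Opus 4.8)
The final statement to prove is Lemma \ref{lem:matnorm}, a standard perturbation bound for matrix inverses, attributed here to Golub–Van Loan. Let me sketch how I would prove it.

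\medskip

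The plan is to proceed in two stages: first establish that $\bm A + \bm E$ is invertible, then derive the quantitative bound on the difference of inverses. Throughout I would write $\bm A + \bm E = \bm A(\bm I + \bm A^{-1}\bm E)$, which reduces everything to understanding the invertibility of $\bm I + \bm F$ where $\bm F = \bm A^{-1}\bm E$. (One subtlety: the hypothesis is stated as $\|\bm E \bm A^{-1}\| < 1$, so I would instead factor $\bm A + \bm E = (\bm I + \bm E \bm A^{-1})\bm A$ to match it; the two factorizations are symmetric and either works with the corresponding hypothesis.) The first step is the classical Neumann series argument: if $\|\bm G\| < 1$ for a submultiplicative norm, then $\bm I + \bm G$ is nonsingular. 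I would argue this by showing $\bm I + \bm G$ has trivial kernel — if $(\bm I + \bm G)\bm x = \bm 0$ then $\bm x = -\bm G\bm x$, and taking norms (after recalling the vector norm subordinate to $\|\cdot\|$, or staying purely matrix-algebraic) forces $\bm x = \bm 0$ — or more cleanly by noting the series $\sum_{j\ge 0}(-\bm G)^j$ converges since $\|(-\bm G)^j\| \le \|\bm G\|^j$ and a geometric series with ratio $< 1$ converges; its sum is a two-sided inverse of $\bm I + \bm G$. Applying this with $\bm G = \bm E \bm A^{-1}$ gives that $\bm I + \bm E \bm A^{-1}$ is nonsingular, and since $\bm A$ is nonsingular, so is $\bm A + \bm E = (\bm I + \bm E\bm A^{-1})\bm A$.

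\medskip

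For the quantitative part, I would start from the algebraic identity
\[
(\bm A + \bm E)^{-1} - \bm A^{-1} = -(\bm A + \bm E)^{-1}\bm E \bm A^{-1},
\]
which follows by writing $(\bm A+\bm E)^{-1} - \bm A^{-1} = (\bm A+\bm E)^{-1}\bigl(\bm A - (\bm A + \bm E)\bigr)\bm A^{-1}$ and simplifying. Taking the norm and using submultiplicativity gives
\[
\|(\bm A + \bm E)^{-1} - \bm A^{-1}\| \le \|(\bm A + \bm E)^{-1}\|\,\|\bm E \bm A^{-1}\|.
\]
The remaining task is to bound $\|(\bm A + \bm E)^{-1}\|$. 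Here I would use the Neumann series sum $(\bm I + \bm E \bm A^{-1})^{-1} = \sum_{j\ge 0}(-\bm E \bm A^{-1})^j$, whose norm is at most $\sum_{j\ge 0}\|\bm E \bm A^{-1}\|^j = 1/(1 - \|\bm E \bm A^{-1}\|)$. Since $(\bm A + \bm E)^{-1} = \bm A^{-1}(\bm I + \bm E \bm A^{-1})^{-1}$, submultiplicativity yields
\[
\|(\bm A + \bm E)^{-1}\| \le \frac{\|\bm A^{-1}\|}{1 - \|\bm E \bm A^{-1}\|}.
\]
Substituting this into the previous inequality produces exactly the claimed bound.

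\medskip

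I do not expect a genuine obstacle here, since this is a textbook estimate; the only points requiring care are bookkeeping ones. The main thing to get right is the side on which the perturbation is factored so that the hypothesis $\|\bm E \bm A^{-1}\| < 1$ (rather than $\|\bm A^{-1}\bm E\|<1$) is the one actually used — this dictates writing $\bm A + \bm E = (\bm I + \bm E\bm A^{-1})\bm A$ and $(\bm A+\bm E)^{-1} = \bm A^{-1}(\bm I + \bm E\bm A^{-1})^{-1}$ consistently. A secondary point is justifying convergence of the Neumann series purely in terms of the abstract submultiplicative matrix norm: completeness of the space of matrices guarantees the absolutely convergent series has a limit, and telescoping $(\bm I + \bm G)\sum_{j=0}^{N}(-\bm G)^j = \bm I - (-\bm G)^{N+1} \to \bm I$ confirms the limit is the inverse. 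Since the paper cites this as a known result from Golub–Van Loan, one could alternatively simply invoke it, but the argument above is short enough to include in full.
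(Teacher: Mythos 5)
The paper offers no proof of this lemma at all---it is quoted as a known result from Golub--Van Loan---so your argument has to stand on its own merits. Its overall structure (factor $\bm A+\bm E=(\bm I+\bm E\bm A^{-1})\bm A$, invert $\bm I+\bm E\bm A^{-1}$ by a Neumann series, then use the identity $(\bm A+\bm E)^{-1}-\bm A^{-1}=-(\bm A+\bm E)^{-1}\bm E\bm A^{-1}$) is the standard one, and the nonsingularity half is fine. But there is a genuine gap in the quantitative half: the step
\[
\Bigl\|\sum_{j\ge 0}(-\bm E\bm A^{-1})^j\Bigr\|\le\sum_{j\ge 0}\|\bm E\bm A^{-1}\|^j=\frac{1}{1-\|\bm E\bm A^{-1}\|}
\]
silently uses $\|(\bm E\bm A^{-1})^0\|=\|\bm I\|\le 1$ for the $j=0$ term. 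The lemma is stated for an \emph{arbitrary} submultiplicative matrix norm---and that generality is exactly what the paper needs, since Corollary \ref{cor:est_H} applies it with the Frobenius norm---but for such norms one only has $\|\bm I\|\ge 1$, and equality can fail badly: $\|\bm I\|_F=\sqrt{n}$. Taking $\bm E=\bm 0$ makes your intermediate claim read $\sqrt{n}\le 1$, which is false. So as written your proof is valid only for induced (operator) norms, not for the norm the paper actually uses.

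The fix is short and stays within your framework: bound $\|(\bm A+\bm E)^{-1}\|$ not by summing the series termwise but by bootstrapping from the identity you already derived. From $(\bm A+\bm E)^{-1}=\bm A^{-1}-(\bm A+\bm E)^{-1}\bm E\bm A^{-1}$, the triangle inequality and submultiplicativity give
\[
\|(\bm A+\bm E)^{-1}\|\le\|\bm A^{-1}\|+\|(\bm A+\bm E)^{-1}\|\,\|\bm E\bm A^{-1}\|,
\]
and since $\|\bm E\bm A^{-1}\|<1$ this rearranges to $\|(\bm A+\bm E)^{-1}\|\le\|\bm A^{-1}\|/(1-\|\bm E\bm A^{-1}\|)$, with no $\|\bm I\|$ appearing anywhere. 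Substituting into $\|(\bm A+\bm E)^{-1}-\bm A^{-1}\|\le\|(\bm A+\bm E)^{-1}\|\,\|\bm E\bm A^{-1}\|$ yields the stated bound for every submultiplicative norm. Note that the Neumann series remains the right tool for the nonsingularity claim: absolute convergence only needs summability of the tail $\sum_{j\ge 1}\|\bm E\bm A^{-1}\|^j<\infty$, so that part of your argument is unaffected by $\|\bm I\|>1$.
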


Applying the above lemma to $\bm{H}_{[1..n-1]}$ yields the following corollary.

\begin{corollary}\label{cor:est_H}
	Let $\overline{\bm{H}}_\alpha=\bm{H}_\alpha+\Delta \bm{H}_\alpha$ and $\|\Delta\bm{H}_\alpha\|_F<\varepsilon_3$,and let $\bm{H}_{[1..n-1]}$ and $\overline{\bm{H}}_{[1..n-1]}$ denote  submatrices  consisting of the first $(n-1)$ rows and the first $(n-1)$ columns of $\bm{H}_\alpha$ and $\overline{\bm{H}}_\alpha$ respectively. When $\varepsilon_3<\frac{1}{\|\bm{H}_{[1..n-1]}^{-1}\|_F}$, $\overline{\bm{H}}_{[1..n-1]}$ is nonsingular and it holds that
	$$\|\overline{\bm{H}}^{-1}_{[1..n-1]}\|_F\le \frac{1}{1-\varepsilon_3 \|\bm{H}_{[1..n-1]}^{-1}\|_F}\|\bm{H}_{[1..n-1]}^{-1}\|_F.$$
\end{corollary}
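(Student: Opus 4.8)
The plan is to derive the result as a direct application of Lemma \ref{lem:matnorm}, using the Frobenius norm as the matrix norm $\|\cdot\|$. First I would verify that the Frobenius norm satisfies the submultiplicativity hypothesis $\|\bm B\bm C\|_F\le\|\bm B\|_F\|\bm C\|_F$ required by Lemma \ref{lem:matnorm}; this is standard, so I would simply invoke it. The key observation is that restricting to the leading $(n-1)\times(n-1)$ submatrices is compatible with the perturbation bound: setting $\bm A=\bm H_{[1..n-1]}$ and $\bm E=\overline{\bm H}_{[1..n-1]}-\bm H_{[1..n-1]}$, the remark preceding Lemma \ref{lem:normH} already gives $\|\bm E\|_F\le\varepsilon_3$, since deleting rows and columns cannot increase the Frobenius norm of the full perturbation $\Delta\bm H_\alpha$.

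Next I would check the smallness condition. By submultiplicativity, $\|\bm E\bm A^{-1}\|_F\le\|\bm E\|_F\,\|\bm A^{-1}\|_F\le\varepsilon_3\,\|\bm H_{[1..n-1]}^{-1}\|_F$, and the hypothesis $\varepsilon_3<1/\|\bm H_{[1..n-1]}^{-1}\|_F$ forces this product to be strictly less than $1$. Thus the condition $\|\bm E\bm A^{-1}\|<1$ of Lemma \ref{lem:matnorm} is met, which immediately yields that $\bm A+\bm E=\overline{\bm H}_{[1..n-1]}$ is nonsingular and that
\[
\|\overline{\bm H}_{[1..n-1]}^{-1}-\bm H_{[1..n-1]}^{-1}\|_F\le \frac{\|\bm E\bm A^{-1}\|_F}{1-\|\bm E\bm A^{-1}\|_F}\,\|\bm H_{[1..n-1]}^{-1}\|_F.
\]

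Finally I would convert this difference bound into the claimed bound on $\|\overline{\bm H}_{[1..n-1]}^{-1}\|_F$ itself. By the triangle inequality,
\[
\|\overline{\bm H}_{[1..n-1]}^{-1}\|_F\le \|\bm H_{[1..n-1]}^{-1}\|_F + \frac{\|\bm E\bm A^{-1}\|_F}{1-\|\bm E\bm A^{-1}\|_F}\,\|\bm H_{[1..n-1]}^{-1}\|_F = \frac{\|\bm H_{[1..n-1]}^{-1}\|_F}{1-\|\bm E\bm A^{-1}\|_F},
\]
and then bound the denominator from below using $\|\bm E\bm A^{-1}\|_F\le\varepsilon_3\|\bm H_{[1..n-1]}^{-1}\|_F$, since $t\mapsto 1/(1-t)$ is increasing on $[0,1)$. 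This replaces $1-\|\bm E\bm A^{-1}\|_F$ by the smaller quantity $1-\varepsilon_3\|\bm H_{[1..n-1]}^{-1}\|_F$ in the denominator, giving exactly the stated inequality.

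There is no substantial obstacle here: the corollary is essentially a packaging of the cited perturbation lemma. The only points requiring minor care are the monotonicity argument when passing from $\|\bm E\bm A^{-1}\|_F$ to the cruder bound $\varepsilon_3\|\bm H_{[1..n-1]}^{-1}\|_F$ in the denominator (one must note the map $t\mapsto t/(1-t)$ is increasing so that replacing $\|\bm E\bm A^{-1}\|_F$ by its upper bound is legitimate), and the observation that passing to the leading principal submatrix does not increase the Frobenius norm of the perturbation.
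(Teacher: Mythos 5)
Your proposal is correct and follows essentially the same route as the paper's own proof: bound the key product $\|\bm E\bm A^{-1}\|_F$ via submultiplicativity and the fact that passing to the leading submatrix does not increase the Frobenius norm of $\Delta\bm H_\alpha$, invoke Lemma \ref{lem:matnorm}, convert the difference bound to a bound on $\|\overline{\bm H}_{[1..n-1]}^{-1}\|_F$ via the triangle inequality, and finally enlarge the denominator using the cruder bound $\varepsilon_3\|\bm H_{[1..n-1]}^{-1}\|_F$. You are in fact slightly more explicit than the paper, which leaves the triangle-inequality and monotonicity steps implicit.
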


\begin{proof}
	When $\varepsilon_3<\frac{1}{\|\bm{H}_{[1..n-1]}^{-1}\|_F}$, it holds that
	\[
	\begin{split}
	\|\Delta\bm{H}_{[1..n-1]}\bm{H}_{[1..n-1]}^{-1}\|_F&\le
	\|\Delta\bm{H}_{[1..n-1]}\|_F\|\bm{H}_{[1..n-1]}^{-1}\|_F\\
	&\le \|\Delta\bm{H}_\alpha\|_F\cdot\|\bm{H}_{[1..n-1]}^{-1}\|_F\\&
	<\varepsilon_3\cdot \|\bm{H}_{[1..n-1]}^{-1}\|_F<1.
	\end{split}
	\]
	From Lemma \ref{lem:matnorm}, $\overline{\bm{H}}_{[1..n-1]}$ is nonsingular and
	it follows that
	\[
	\begin{split}
	\|\overline{\bm{H}}_{[1..n-1]}^{-1}\|_F&<\frac{1}{1-\|\Delta\bm{H}_{[1..n-1]}
		\bm {H}_{[1..n-1]}^{-1}\|_F}\|\bm{H}_{[1..n-1]}^{-1}\|_F\\
	&\le\frac{1}{1-\varepsilon_3\|\bm{H}_{[1..n-1]}^{-1}\|_F}\|\bm{H}_{[1..n-1]}^{-1
	} \|_F .
	\end{split}
	\]
	This completes the proof.
\end{proof}

Corollary  \ref{cor:est_H} shows that when
$\varepsilon_3<{1}/{\|\bm{H}_{[1..n-1]}^{-1}\|_F}$, it holds that
$\overline{h}_{i,i}\ne 0$ for $i=1,\cdots,n-1$. Denote by $\overline{\bm
	\alpha}=(\overline{\alpha}_1,\cdots,\overline{\alpha}_n)$ a unit real vector
satisfying $\overline{\bm \alpha}\overline{\bm{H}}_{\alpha}=0$. Without loss of generality, we assume that
$\overline{\alpha}_n\ne 0$. Otherwise we can deduce $\overline{\bm \alpha}=\bm
0$, which contradicts to that $\overline{\bm
	\alpha}$ is a unit vector. (In fact, since $\bar{\alpha}_{n-1}\bar{h}_{n-1,n-1}+\bar{\alpha}_{n}\bar{h}_{n,n-1}=0$ and   $\overline{h}_{n-1,n-1}\ne 0$ we have $\bar{\alpha}_{n}=0$ implies $\bar{\alpha}_{n-1}=0$. Similarly, $\bar{\alpha}_{i}=0$ for $i=1,2,\cdots n-2$.) Moreover, we can choose vector $\overline{\bm\alpha}$ with
$\overline{\alpha}_n>0$. Next, we give a nonzero lower bound on
$\overline{\alpha}_n$.

\begin{lemma}\label{lemma:method_proof}
Let  $\bm \xi=(\xi_1,\cdots,\xi_{n-1},1)$ be a real vector with $\|\bm \xi\|\le M$ and let  $\bm\beta=\frac{\bm\xi}{\|\bm\xi\|}=(\beta_1,\cdots,\beta_{n-1},\beta_n)$. Then it holds that
	$\abs{\beta_n}\ge \frac{1}{M}$.
\end{lemma}

\begin{proof}
	According to assumptions,
	$$1=\|\bm\beta\|=\abs{\beta_n\left(\frac{\beta_1}{\beta_n},\cdots,
		\frac{\beta_{n-1}}{\beta_n},1\right)}=\|\beta_n\bm\xi\|\le |\beta_n|\cdot\|\bm\xi\|\le |\beta_n|M.$$
	The proof of lemma is finished.
\end{proof}

The above lemma enables us to give a lower bound of some component of a unit vector.
\begin{lemma}\label{lemma:est_low}
	Let $\overline{\bm \alpha}=(\overline{\alpha}_1,\cdots,\overline{\alpha}_{n-1},\overline{\alpha}_n)$ be a unit vector such that $\overline{\bm \alpha}\overline{\bm{H}}_\alpha=0$.  If $\varepsilon_3$ given in \eqref{eq:e3} is less than $\frac{\alpha_n}{\sqrt{(n-2)\alpha_n^2+1}}$, then
	$$|\overline{\alpha}_n|\ge \frac{\alpha_n}{2\sqrt{1-\alpha_n^2}\sqrt{(n-2)\alpha_n^2+1}+2\alpha_n}.$$
\end{lemma}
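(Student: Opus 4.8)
The plan is to deduce the bound from Lemma \ref{lemma:method_proof} by exhibiting $\overline{\bm\alpha}$ as the normalization of a vector whose last coordinate is $1$ and whose norm I can control. Let $\overline{\bm h}_n=(\overline h_{n,1},\dots,\overline h_{n,n-1})$ denote the last row of $\overline{\bm H}_\alpha$ and write $\overline{\bm\alpha}'=(\overline\alpha_1,\dots,\overline\alpha_{n-1})$. The relation $\overline{\bm\alpha}\,\overline{\bm H}_\alpha=0$ splits as $\overline{\bm\alpha}'\overline{\bm H}_{[1..n-1]}+\overline\alpha_n\overline{\bm h}_n=0$, and Corollary \ref{cor:est_H} makes $\overline{\bm H}_{[1..n-1]}$ invertible under the hypothesis on $\varepsilon_3$, so I can solve $\overline{\bm\alpha}'=-\overline\alpha_n\,\overline{\bm h}_n\overline{\bm H}_{[1..n-1]}^{-1}$. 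Since $\overline\alpha_n>0$, the vector $\bm\xi:=\overline{\bm\alpha}/\overline\alpha_n=(-\overline{\bm h}_n\overline{\bm H}_{[1..n-1]}^{-1},\,1)$ has last entry $1$ and satisfies $\overline{\bm\alpha}=\bm\xi/\norm{\bm\xi}$; hence Lemma \ref{lemma:method_proof} yields $|\overline\alpha_n|\ge 1/M$ as soon as $\norm{\bm\xi}\le M$. Using $\sqrt{x^2+1}\le x+1$ and submultiplicativity of the Frobenius norm, it therefore suffices to take $M=\norm{\overline{\bm h}_n}\,\norm{\overline{\bm H}_{[1..n-1]}^{-1}}_F+1$.

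The next step is to replace the two perturbed quantities by their exact counterparts. For the inverse, Corollary \ref{cor:est_H} controls $\norm{\overline{\bm H}_{[1..n-1]}^{-1}}_F$ through $\norm{\bm H_{[1..n-1]}^{-1}}_F$, which Lemma \ref{lem:normH} (with $\norm{\bm\alpha}=1$) evaluates to $\sqrt{(n-2)\alpha_n^2+1}/\alpha_n$. For the last row, the perturbation bound \eqref{eq:e3} restricts the change in any single row, so the triangle inequality gives $\norm{\overline{\bm h}_n}\le\norm{\bm h_n}+\varepsilon_3$, and I would compute $\norm{\bm h_n}$ exactly: from \eqref{Halpha}, $\norm{\bm h_n}^2=\sum_{j=1}^{n-1}\alpha_n^2\alpha_j^2/(s_j^2 s_{j+1}^2)$, and the telescoping identity $\alpha_j^2/(s_j^2 s_{j+1}^2)=1/s_{j+1}^2-1/s_j^2$ collapses the sum to $\alpha_n^2(1/\alpha_n^2-1)=1-\alpha_n^2$, so $\norm{\bm h_n}=\sqrt{1-\alpha_n^2}$.

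Finally I would multiply the two bounds and add $1$. The clean constant in the conclusion forces the inverse to be controlled by exactly twice its exact norm, $\norm{\overline{\bm H}_{[1..n-1]}^{-1}}_F\le 2\norm{\bm H_{[1..n-1]}^{-1}}_F$; by Corollary \ref{cor:est_H} this holds once $\varepsilon_3\norm{\bm H_{[1..n-1]}^{-1}}_F\le 1/2$, which is the quantitative smallness the hypothesis on $\varepsilon_3$ must supply (note that $1/\norm{\bm H_{[1..n-1]}^{-1}}_F=\alpha_n/\sqrt{(n-2)\alpha_n^2+1}$ is precisely the threshold appearing there). Granting this, $2\varepsilon_3\norm{\bm H_{[1..n-1]}^{-1}}_F\le 1$ lets the cross term be absorbed: $(\norm{\bm h_n}+\varepsilon_3)\cdot 2\norm{\bm H_{[1..n-1]}^{-1}}_F\le 2\norm{\bm h_n}\norm{\bm H_{[1..n-1]}^{-1}}_F+1$, whence $M=\norm{\bm\xi}\le 2(\norm{\bm h_n}\norm{\bm H_{[1..n-1]}^{-1}}_F+1)$ and $|\overline\alpha_n|\ge 1/M$. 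Substituting $\norm{\bm h_n}=\sqrt{1-\alpha_n^2}$ and $\norm{\bm H_{[1..n-1]}^{-1}}_F=\sqrt{(n-2)\alpha_n^2+1}/\alpha_n$ gives exactly the claimed lower bound. I expect the main obstacle to be precisely this last calibration: tracking how the multiplicative factor from Corollary \ref{cor:est_H} and the additive factor $\norm{\bm h_n}+\varepsilon_3$ interact so that the cross term is swallowed by the additive $+1$, which is what pins down both the constant $2$ and the exact smallness required of $\varepsilon_3$.
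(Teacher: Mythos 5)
Your proposal is correct and is essentially the paper's own proof: the paper likewise normalizes the last coordinate to $1$, solves $(x_1,\dots,x_{n-1})=-(\overline h_{n,1},\dots,\overline h_{n,n-1})\overline{\bm H}_{[1..n-1]}^{-1}$, bounds this using Corollary \ref{cor:est_H} with the factor $2$, the exact values $\norm{(h_{n,1},\dots,h_{n,n-1})}=\sqrt{1-\alpha_n^2}$ and $\norm{\bm H_{[1..n-1]}^{-1}}_F=\sqrt{(n-2)\alpha_n^2+1}/\alpha_n$ from Lemma \ref{lem:normH}, absorbs the cross term into an additive $+1$, and concludes via Lemma \ref{lemma:method_proof}. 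One remark: exactly as in the paper's proof, your argument genuinely needs $\varepsilon_3<\frac{\alpha_n}{2\sqrt{(n-2)\alpha_n^2+1}}$ (so that $\varepsilon_3\norm{\bm H_{[1..n-1]}^{-1}}_F\le 1/2$), which is stronger than the hypothesis as literally stated in the lemma but is what Theorem \ref{thm:forward} supplies when invoking it; this calibration gap is present in the paper itself and is not a defect of your write-up.
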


\begin{proof}
	Consider the linear system $(x_1,\cdots,x_n)\overline{\bm{H}}_\alpha=0$ with unknowns $x_i$  for $i=1,2,\cdots,n$. Since the rank of  $\overline{\bm{H}}_\alpha$ is at most $n-1$, we can assume that $x_n=1$; then it reduces to the following system:
	$$(x_1,\cdots,x_{n-1})\overline{\bm{H}}_{[1..n-1]}=-(\overline{h}_{n,1},\cdots,\overline{h}_{n,n-1}).$$
	If $\varepsilon_3<\frac{\alpha_n}{2\sqrt{(n-2)\alpha_n^2+1}}$, then $\overline{\bm{H}}_{[1..n-1]}$ is nonsingular by Lemma \ref{lem:normH} and Corollary \ref{cor:est_H}, so $(x_1,\cdots,x_{n-1})=-(\overline{h}_{n,1},\cdots,\overline{h}_{n,n-1})\overline{\bm{H}}_{[1..n-1]}^{-1}$. Hence, it holds that
	\begin{equation*}
	\begin{split}
	\|(x_1,\cdots,x_{n-1})\|_2&\le \|(\overline{h}_{n,1},\cdots,\overline{h}_{n,n-1})\|_2\|\overline{\bm{H}}_{[1..n-1]}^{-1}\|_2\\
	&\le \|(\overline{h}_{n,1},\cdots,\overline{h}_{n,n-1})\|_2\|\overline{\bm{H}}_{[1..n-1]}^{-1}\|_F \\
	&\le (\|(h_{n,1},\cdots,h_{n,n-1})\|_2+\varepsilon_3)\|\overline{\bm{H}}_{[1..n-1]}^{-1}\|_F\\
	&\le \left(\sqrt{1-\alpha_n^2}+\frac{\alpha_n}{2\sqrt{(n-2)\alpha_n^2+1}}\right)2\|\bm{H}_{[1..n-1]}^{-1}\|_F\\
	&=2\sqrt{1-\alpha_n^2}\|\bm{H}_{[1..n-1]}^{-1}\|_F+1\\
	&=\frac{2\sqrt{1-\alpha_n^2}\sqrt{(n-2)\alpha_n^2+1}}{\alpha_n}+1.
	\end{split}
	\end{equation*}
	Thus, it is obtained that
	\begin{equation*}
	\begin{split}
	\|(x_1,\cdots,x_{n-1},x_n)\|_2&\le \|(x_1,\cdots,x_{n-1})\|_2+1\\
	&\le \frac{2\sqrt{1-\alpha_n^2}\sqrt{(n-2)\alpha_n^2+1}}{\alpha_n}+2\\
	&=\frac{2\sqrt{1-\alpha_n^2}\sqrt{(n-2)\alpha_n^2+1}+2\alpha_n}{\alpha_n}.
	\end{split}
	\end{equation*}
	From Lemma \ref{lemma:method_proof}, it follows that
	$$ |\overline{\alpha}_n|\ge\frac{1}{\norm{(x_1,x_2,\cdots,x_n)}}\ge\frac{\alpha_n}{2\sqrt{1-\alpha_n^2}\sqrt{(n-2)\alpha_n^2+1}+2\alpha_n}.$$
	The proof of the lemma is finished.
\end{proof}

We now give the main theorem of this paper, which can be seen as a forward error analysis of $\PSLQ_{\varepsilon}$ for the perturbation introduced in \eqref{eq:e3}.

\begin{theorem}\label{thm:forward}
	Given a real vector $\bm \alpha=(\alpha_1,\cdots,\alpha_n)$, let
	$\bm{H}_\alpha$ be the hyperplane matrix constructed as in (\ref{Halpha}).
	Let
	$\overline{\bm{H}}_\alpha$ be an approximate matrix of $\bm{H}_\alpha$ with	$\|\bm{H}_\alpha-\overline{\bm{H}}_\alpha\|_F<\varepsilon_3<\frac{\alpha_n}{2\sqrt{(n-2)\alpha_n^2+1}}$. Let $\bm{A}$ be the unimodular matrix  and let $\bm{Q}$ be the orthogonal matrix such that $\bm{H}=(h_{i,j})=\bm{A\overline{H}}_\alpha\bm{Q}$ is a lower
	trapezoidal matrix at the termination of $\PSLQ_\varepsilon$ with $|h_{n,n-1}|<\varepsilon_2$. Let
	$\bm m$ denote the $(n-1)$-th column of $\bm{A}^{-1}$. Then
	$$|\inner{\bm \alpha,\bm m}|<C\cdot(\|\bm m\|\varepsilon_3+\alpha_n\varepsilon_2) ,$$
	where $C = \frac{2(\sqrt{(n-2)\alpha_n^2+1}+\alpha_n)}{\alpha_n}$.
\end{theorem}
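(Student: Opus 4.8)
The plan is to compare $\langle\bm\alpha,\bm m\rangle$ with $\langle\overline{\bm\alpha},\bm m\rangle$, where $\overline{\bm\alpha}$ is the unit left-kernel of the input, $\overline{\bm\alpha}\,\overline{\bm H}_\alpha=\bm 0$ with $\overline\alpha_n>0$. Since $\bm m$ is the $(n-1)$-th column of $\bm A^{-1}$ and the only thing I control from below is the last coordinate $\overline\alpha_n$ (Lemma \ref{lemma:est_low}), I would eliminate the $n$-th coordinate by writing
\[
\bm\alpha=\frac{\alpha_n}{\overline\alpha_n}\,\overline{\bm\alpha}+\bm s,\qquad \bm s:=\bm\alpha-\frac{\alpha_n}{\overline\alpha_n}\overline{\bm\alpha},
\]
so that $\bm s$ has vanishing $n$-th component. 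Pairing with $\bm m$ gives
\[
\langle\bm\alpha,\bm m\rangle=\frac{\alpha_n}{\overline\alpha_n}\langle\overline{\bm\alpha},\bm m\rangle+\langle\bm s,\bm m\rangle,
\]
and the whole statement reduces to bounding the two terms by $C\alpha_n\varepsilon_2$ and $C\|\bm m\|\varepsilon_3$. The common factor $1/\overline\alpha_n\le C$ is exactly Lemma \ref{lemma:est_low} after dropping $\sqrt{1-\alpha_n^2}\le 1$, so $C$ will enter only through this one estimate.

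For the first term I would apply the invariant of Theorem \ref{lem:zn-1} to the kernel vector $\overline{\bm\alpha}$ of the actual input $\overline{\bm H}_\alpha$. Setting $(z_1,\dots,z_n)=\overline{\bm\alpha}\bm A^{-1}$ we have $z_{n-1}=\langle\overline{\bm\alpha},\bm m\rangle$ and $(z_1,\dots,z_n)\bm H=\overline{\bm\alpha}\,\overline{\bm H}_\alpha\bm Q=\bm 0$, so the invariant yields $|\langle\overline{\bm\alpha},\bm m\rangle|\le\sqrt{\overline\alpha_{n-1}^2+\overline\alpha_n^2}\,|h_{n,n-1}|\le|h_{n,n-1}|<\varepsilon_2$, the middle inequality because $\overline{\bm\alpha}$ is a unit vector. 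Multiplying by $\alpha_n/\overline\alpha_n\le C\alpha_n$ gives the desired $C\alpha_n\varepsilon_2$.

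For the perturbation term I would use $\bm\alpha\bm H_\alpha=\bm 0$ and $\overline{\bm\alpha}\,\overline{\bm H}_\alpha=\bm 0$. With $\Delta\bm H_\alpha=\overline{\bm H}_\alpha-\bm H_\alpha$, the terms $\bm\alpha\bm H_\alpha$ drop out and $\overline{\bm\alpha}\bm H_\alpha=-\overline{\bm\alpha}\Delta\bm H_\alpha$, so $\bm s\bm H_\alpha=\frac{\alpha_n}{\overline\alpha_n}\overline{\bm\alpha}\,\Delta\bm H_\alpha$. Because $s_n=0$, this reads $(s_1,\dots,s_{n-1})\bm H_{[1..n-1]}=\frac{\alpha_n}{\overline\alpha_n}\overline{\bm\alpha}\,\Delta\bm H_\alpha$, whence
\[
\|\bm s\|=\|(s_1,\dots,s_{n-1})\|\le\frac{\alpha_n}{\overline\alpha_n}\,\|\overline{\bm\alpha}\|\,\|\Delta\bm H_\alpha\|_2\,\|\bm H_{[1..n-1]}^{-1}\|_2\le\frac{\alpha_n\varepsilon_3}{\overline\alpha_n}\,\|\bm H_{[1..n-1]}^{-1}\|_2 .
\]
The decisive fact is $\|\bm H_{[1..n-1]}^{-1}\|_2=1/\alpha_n$: since $\bm H_\alpha$ has orthonormal columns, $\bm H_{[1..n-1]}^{T}\bm H_{[1..n-1]}=\bm I_{n-1}-\bm h_n^{T}\bm h_n$ where $\bm h_n$ is the last row of $\bm H_\alpha$, and Lemma \ref{lem:normH} gives $\|\bm h_n\|^2=(n-1)-\|\bm H_{[1..n-1]}\|_F^2=1-\alpha_n^2$, so $\sigma_{\min}(\bm H_{[1..n-1]})=\alpha_n$. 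This collapses the estimate to $\|\bm s\|\le\varepsilon_3/\overline\alpha_n\le C\varepsilon_3$, hence $|\langle\bm s,\bm m\rangle|\le C\varepsilon_3\|\bm m\|$; adding the two contributions and keeping the strict inequality from $|h_{n,n-1}|<\varepsilon_2$ completes the proof.

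The hard part is the perturbation term, and precisely driving its coefficient down to $C$ rather than $C\sqrt{(n-2)\alpha_n^2+1}$. This is what forces the use of the spectral norm $\|\bm H_{[1..n-1]}^{-1}\|_2=1/\alpha_n$ in place of the Frobenius norm of Lemma \ref{lem:normH}, together with the restriction of the perturbation to the hyperplane $\{x_n=0\}$ engineered by the choice of $\bm s$; without the zero last coordinate the factor $1/\sigma_{\min}$ would not be available. A secondary subtlety is that Theorem \ref{lem:zn-1} is phrased for a genuine hyperplane matrix while $\overline{\bm H}_\alpha$ is only lower trapezoidal, so I would check that the monotonicity argument underlying the invariant carries over, which it should, as that argument uses only the lower-trapezoidal structure and the kernel relation rather than the arithmetic form \eqref{Halpha}.
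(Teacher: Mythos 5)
Your proof is correct and arrives at exactly the stated bound, but your handling of the $\varepsilon_3$ term is genuinely different from the paper's. The two arguments share their skeleton: both apply the invariant of Theorem \ref{lem:zn-1} to the unit left-kernel vector $\overline{\bm\alpha}$ of the actual input $\overline{\bm H}_\alpha$ to get $\abs{\inner{\overline{\bm\alpha},\bm m}}<\varepsilon_2$, and both convert $1/\overline\alpha_n$ into the constant $C$ via Lemma \ref{lemma:est_low}. But where you split $\bm\alpha=\frac{\alpha_n}{\overline\alpha_n}\overline{\bm\alpha}+\bm s$ with $s_n=0$, the paper instead introduces the auxiliary system $\overline{\bm H}_\alpha\bm c=\bm m+(0,\dots,0,b)^T$, deduces $\overline\alpha_n b=-\inner{\overline{\bm\alpha},\bm m}$, writes $\inner{\bm\alpha,\bm m}=\bm\alpha(\overline{\bm H}_\alpha-\bm H_\alpha)\bm c-\alpha_n b$ using $\bm\alpha\bm H_\alpha=\bm 0$, and bounds $\norm{\bm c}\le\norm{\overline{\bm H}_{[1..n-1]}^{-1}}_F\norm{\bm m}$; this forces it to control the inverse of the \emph{perturbed} corner block, which is exactly what Corollary \ref{cor:est_H} (combined with the Frobenius formula of Lemma \ref{lem:normH}) supplies, and the coefficient $2\sqrt{(n-2)\alpha_n^2+1}/\alpha_n\le C$ arises there. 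Your route dispenses with Corollary \ref{cor:est_H} entirely---only the exact block $\bm H_{[1..n-1]}$ ever appears---but in exchange needs the spectral-norm identity $\norm{\bm H_{[1..n-1]}^{-1}}_2=1/\alpha_n$, which is nowhere in the paper. That identity is correct: the columns of $\bm H_\alpha$ are orthonormal, so $\bm H_{[1..n-1]}^T\bm H_{[1..n-1]}=\bm I_{n-1}-\bm h_n^T\bm h_n$ with $\bm h_n$ the last row of $\bm H_\alpha$, and the telescoping computation in Appendix \ref{sec:condnum} gives $\norm{\bm h_n}^2=1-\alpha_n^2$, whence $\sigma_{\min}(\bm H_{[1..n-1]})=\alpha_n$. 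Your diagnosis of why the spectral norm is unavoidable in your decomposition is also right: you have already spent the full budget $1/\overline\alpha_n\le C$ on the coefficient of $\overline{\bm\alpha}$, so the inverse may only contribute a factor $1/\alpha_n$, which the Frobenius norm does not deliver; the paper can afford the Frobenius norm because its $1/\alpha_n$-type factor enters only once, through $\norm{\bm c}$.

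One caveat, which you flag but resolve a bit too quickly (and which the paper does not flag at all): Theorem \ref{lem:zn-1} is not a consequence of the lower-trapezoidal structure and the kernel relation alone. Its proof bounds the monotone quantity $\abs{z_n/h_{n-1,n-1}}$ by its \emph{initial} value and evaluates that initial value as $\sqrt{\alpha_{n-1}^2+\alpha_n^2}\le 1$ using the explicit entries \eqref{Halpha}; for a general lower-trapezoidal input the initial value is $\abs{\overline\alpha_n/\overline h_{n-1,n-1}}$, which need not be at most $1$. In the present setting this is harmless: since $h_{n-1,n-1}\ge\alpha_n$ and $\varepsilon_3<\alpha_n/2$, the input entry satisfies $\overline h_{n-1,n-1}>\alpha_n/2$, so the invariant gives $\abs{\inner{\overline{\bm\alpha},\bm m}}<\frac{2\overline\alpha_n}{\alpha_n}\varepsilon_2$, and your first term becomes $\frac{\alpha_n}{\overline\alpha_n}\abs{\inner{\overline{\bm\alpha},\bm m}}<2\varepsilon_2\le C\alpha_n\varepsilon_2$ because $C\alpha_n\ge 2$. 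So the conclusion survives, but this one extra line is needed to make the appeal to Theorem \ref{lem:zn-1} legitimate---in your write-up and in the paper's.
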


\begin{proof}
	Suppose that $\PSLQ_\varepsilon$ returns $\bm m$ with $\overline{\bm{H}}_{\alpha}$ as the hyperplane matrix, when  $\abs{h_{n,n-1}}< \varepsilon_2$. Then this process can be seen as running $\PSLQ_\varepsilon$ for a unit vector $\overline{\bm \alpha}$ satisfying  $\overline{\bm \alpha}\overline{\bm{H}}_\alpha=0$. According to Theorem \ref{lem:zn-1}, we have  $\abs{\inner{\overline{\bm \alpha},\bm m}}\le \varepsilon_2$.
	Now we consider the  following system
	\begin{equation}\label{equ:c}
	\overline{\bm{H}}_{\alpha}
	\bm c
	=\bm m+ (0,0,\cdots,0,b)^T
	\end{equation}
	where $\bm c = (c_1,c_2,\cdots,c_{n-1})^T$ is the unknown vector. We have that
	$$0=\overline{\bm \alpha}\overline{\bm{H}}_{\alpha}\bm c =\inner{\overline{\bm \alpha},\bm m}+\overline{\bm \alpha}_nb.$$
	Hence $\overline{\alpha}_nb=-\inner{\overline{\bm \alpha},\bm m}$, and we have
	\[|b|<\frac{\varepsilon_2}{\overline{\alpha}_n}.\]
From \eqref{equ:c} it holds that
	\begin{eqnarray*}
		|\inner{\bm \alpha,\bm m}|&=&\left|\bm \alpha\overline{\bm{H}}_{\alpha}\bm c-\alpha_n b\right|\le \left|\bm \alpha\overline{\bm{H}}_{\alpha}\bm c\right|+|\alpha_n|| b| \\
		&\le&\left|\bm \alpha(\overline{\bm{H}}_{\alpha}-\bm{H}_{\alpha})\bm c\right|+|\alpha_n|| b|  \\
		&\le& \|\bm \alpha\|\|\overline{\bm{H}}_{\alpha}-\bm{H}_{\alpha}\|_2\left\|\bm c\right\|+|\alpha_n|| b| \\
		&\le& \|\bm  \alpha\|\|\overline{\bm{H}}_{\alpha}-\bm{H}_{\alpha}\|_2\left\|\bm c\right\|+|\alpha_n|| b| \\
		&\le&\|\bm \alpha\|\left\| \bm c\right\|\varepsilon_3+\frac{|\alpha_n|}{|\overline{\alpha}_n|}\varepsilon_2.
	\end{eqnarray*}
	Since  $\varepsilon_3<\frac{\alpha_n}{2\sqrt{(n-2)\alpha_n^2+1}}$,
	by Lemma \ref{lemma:est_low}, it follows that
	\begin{equation}\label{equ:est_tem}
	|\inner{\bm \alpha,\bm m}|<\|\bm \alpha\|\cdot\left\| \bm c\right\|\varepsilon_3+2(\sqrt{1-\alpha_n^2}\sqrt{(n-2)\alpha_n^2+1}+\alpha_n)\varepsilon_2.
	\end{equation}
	The first $n-1$ equations of \eqref{equ:c} give a square system
	$$\overline{\bm{H}}_{[1..n-1]}\begin{pmatrix}
	c_1\\
	\vdots\\
	c_{n-1}
	\end{pmatrix}
	=\begin{pmatrix}
	m_1\\
	\vdots\\
	m_{n-1}
	\end{pmatrix}.$$
	Then it is obtained that
	\begin{equation*}
	\left\|\bm c\right\|\le \|\overline{\bm{H}}_{[1..n-1]}^{-1}\|_2\left\|\begin{pmatrix}
	m_1\\
	\vdots\\
	m_{n-1}
	\end{pmatrix}\right\|_2\le \|\overline{\bm{H}}_{[1..n-1]}^{-1}\|_F\|\bm m\|.
	\end{equation*}
	Since $\varepsilon_3<\frac{\alpha_n}{2\sqrt{(n-2)\alpha_n^2+1}}$, by Corollary \ref{cor:est_H} we have
	\begin{equation*}
	\begin{split}
	\left\|\bm c\right\|&\le \frac{1}{1-\varepsilon_3 \|\bm{H}_{[1..n-1]}^{-1}\|_F}\|\bm{H}_{[1..n-1]}^{-1}\|_F\|\bm m\|_2\\&<2\|\bm{H}_{[1..n-1]}^{-1}\|_F\|\bm m\|_2<\frac{2\sqrt{(n-2)\alpha_n^2+1}}{\alpha_n}\|\bm m\|_2.
	\end{split}
	\end{equation*}
	Substituting the above inequality into  (\ref{equ:est_tem}) yields
	\[\begin{split}
	|\inner{\bm \alpha,\bm m}|&<\|\bm \alpha\|\left\| \bm c\right\|\varepsilon_3+2(\sqrt{1-\alpha_n^2}\sqrt{(n-2)\alpha_n^2+1}+\alpha_n)\varepsilon_2\\
	&<\frac{2\sqrt{(n-2)\alpha_n^2+1}}{\alpha_n}\|\bm m\|\varepsilon_3+2(\sqrt{1-\alpha_n^2}\sqrt{(n-2)\alpha_n^2+1}+\alpha_n)\varepsilon_2\\
	&<\frac{2(\sqrt{(n-2)\alpha_n^2+1}+\alpha_n)}{\alpha_n}\|\bm m\|\varepsilon_3+2(\sqrt{(n-2)\alpha_n^2+1}+\alpha_n)\varepsilon_2\\
	&=\frac{2(\sqrt{(n-2)\alpha_n^2+1}+\alpha_n)}{\alpha_n}(\|\bm m\|\varepsilon_3+\alpha_n\varepsilon_2).
	\end{split}
	\]
	The theorem is proved.
\end{proof}

Although the quantity $\abs{\inner{\bm \alpha,\bm m}}$  is usually nonzero for empirical data, it  measures somewhat how close is from $\bm m$ to  a true integer relation for $\bm \alpha$. So it  can be seen as output error. In this sense,
Theorem \ref{thm:forward} says that if a perturbation of the input $\bm{H}_\alpha$ is small enough then the ``output error'' of $\PSLQ_{\varepsilon}$ can  be also small. Roughly speaking, if we fix the termination condition $\varepsilon_2$ to be a tiny number, then the ``output error'' is amplified by a factor at most $C\cdot\|\bm m\|$.

\section[PSLQ with Empirical Data]{$\PSLQ_\varepsilon$ with Empirical Data}\label{sec:emp}

Aiming to obtain $\bm m$ by $\PSLQ_\varepsilon$ such that $\abs{\inner{\bm{\alpha},\bm{m}}}<\varepsilon$, we study how to determine the error control parameters $\varepsilon_1$, $\varepsilon_2$ and $\varepsilon_3$ in this section.

\subsection[Error Control of PSLQ]{Error Control of {$\PSLQ_{\varepsilon}$}}

\begin{lemma}\label{lem:Hbar}
	Let $\bm \alpha=(\alpha_1,\cdots,\alpha_n)$ be an $n$-dimensional unit vector with $|\alpha_n|=\max_i\{|\alpha_i|\}$ and let $\bar{\bm \alpha}$ be its approximation. Construct $\bm{H}_\alpha$ and $\bm{H}_{\bar{\alpha}}$ as in (\ref{Halpha}) for $\bm \alpha$ and $\bar{\bm \alpha}$ respectively. If $\|\bm \alpha-\bar{\bm\alpha}\|<\frac{1}{8n}$, Then it holds that
	\[
	\|\bm{ H}_\alpha - \bm{ H}_{\bar{\alpha}}\|_F< 8n^{\frac{3}{2}}\|\bm{\alpha} - \overline{\bm{\alpha}}\|.
	\]
\end{lemma}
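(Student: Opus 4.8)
The plan is to avoid estimating $\bm H_\alpha-\bm H_{\bar\alpha}$ entry by entry. The reciprocal factor $1/(s_js_{j+1})$ in the subdiagonal entries can be as large as $\Theta(n)$, so the quotient rule applied to a single entry produces a denominator $\sim 1/n^2$ and hence an error of order $n^2\varepsilon$ per entry; summing $\Theta(n^2)$ such entries would only give $\norm{\bm H_\alpha-\bm H_{\bar\alpha}}_F=O(n^3\varepsilon)$, far too weak. Instead I would estimate the error \emph{one column at a time}, exploiting that each column of a hyperplane matrix is a unit vector. This cancels one of the two reciprocal factors, keeps the per-column error at $O(n\varepsilon)$, and after summing $n-1$ columns in the Frobenius norm yields the correct order $O(n^{3/2}\varepsilon)$. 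Throughout set $\varepsilon=\norm{\bm\alpha-\bar{\bm\alpha}}$, $\delta_k=\abs{\alpha_k-\bar\alpha_k}$ (so $\sum_{k=1}^n\delta_k^2=\varepsilon^2$), and $\bar s_j^2=\sum_{k=j}^n\bar\alpha_k^2$.

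First I would record size estimates on the partial norms. Since $\abs{\alpha_n}=\max_i\abs{\alpha_i}$ and $\norm{\bm\alpha}=1$, we have $\abs{\alpha_n}\ge 1/\sqrt n$, hence $1/\sqrt n\le s_j\le 1$ for every $j$. The reverse triangle inequality on the tail vectors gives $\abs{s_j-\bar s_j}\le\varepsilon$, and together with $\varepsilon<1/(8n)$ this yields $\frac{7}{8\sqrt n}\le\bar s_j\le 1+\varepsilon$; in particular every $\bar s_j>0$, so $\bm H_{\bar\alpha}$ is well defined and $\bar g_j:=\bar s_j\bar s_{j+1}\ge\frac{49}{64n}$.

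The core step is the column decomposition. For $1\le j\le n-1$ write the $j$-th column of $\bm H_\alpha$ as $\bm w_j/g_j$, where $g_j=s_js_{j+1}$, $(\bm w_j)_j=s_{j+1}^2$, $(\bm w_j)_i=-\alpha_i\alpha_j$ for $i>j$, and $(\bm w_j)_i=0$ for $i<j$. Using $s_j^2=\alpha_j^2+s_{j+1}^2$ one computes $\norm{\bm w_j}^2=s_{j+1}^4+\alpha_j^2s_{j+1}^2=s_j^2s_{j+1}^2$, i.e.\ $\norm{\bm w_j}=g_j$ (the column is a unit vector), and likewise $\norm{\bar{\bm w}_j}=\bar g_j$. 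Writing $\frac{\bm w_j}{g_j}-\frac{\bar{\bm w}_j}{\bar g_j}=\frac{\bm w_j(\bar g_j-g_j)+g_j(\bm w_j-\bar{\bm w}_j)}{g_j\bar g_j}$ and using $\norm{\bm w_j}=g_j$ to cancel a factor then collapses the $j$-th column error to
\[
\big\|(\bm H_\alpha-\bm H_{\bar\alpha})_{\cdot,j}\big\|\le\frac{\abs{g_j-\bar g_j}+\norm{\bm w_j-\bar{\bm w}_j}}{\bar g_j}.
\]

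To finish I would bound the right-hand side. Expanding products as $g_j-\bar g_j=s_j(s_{j+1}-\bar s_{j+1})+\bar s_{j+1}(s_j-\bar s_j)$ and $\alpha_i\alpha_j-\bar\alpha_i\bar\alpha_j=\alpha_j(\alpha_i-\bar\alpha_i)+\bar\alpha_i(\alpha_j-\bar\alpha_j)$, then invoking the size estimates with $\sum_{i>j}\delta_i^2\le\varepsilon^2$ and $\sum_{i>j}\bar\alpha_i^2=\bar s_{j+1}^2$, gives $\abs{g_j-\bar g_j}=O(\varepsilon)$ and $\norm{\bm w_j-\bar{\bm w}_j}=O(\varepsilon)+O(\delta_j)$. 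Since $\bar g_j\ge\frac{49}{64n}$, the $j$-th column error is $O(n)(\varepsilon+\delta_j)$, so
\[
\norm{\bm H_\alpha-\bm H_{\bar\alpha}}_F^2=\sum_{j=1}^{n-1}\big\|(\bm H_\alpha-\bm H_{\bar\alpha})_{\cdot,j}\big\|^2\le\sum_{j=1}^{n-1}O(n^2)(\varepsilon+\delta_j)^2 .
\]
Expanding the square and using $\sum_j\delta_j^2\le\varepsilon^2$ and $\sum_j1<n$ bounds this by $O(n^3)\varepsilon^2$, and taking square roots gives the claim. I expect the main obstacle to be purely quantitative: the structure forces the order $n^{3/2}$, but one must carry the constants honestly through the reciprocal $1/\bar g_j\le\frac{64n}{49}$ and the sum to check that they combine below $8$. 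A rough accounting gives a leading coefficient near $6n^{3/2}$, so the stated constant $8$ — and the threshold $1/(8n)$, which is exactly what makes $\bar s_j\ge\frac{7}{8\sqrt n}$ — leaves comfortable slack; the delicate point is not to discard the $(2+\varepsilon)$-type factors too early.
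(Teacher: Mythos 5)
Your proof is correct, but it follows a genuinely different route from the paper's. The paper proceeds exactly by the entry-wise estimation you dismiss at the outset --- only not naively: inside a telescoping expansion of $\frac{\alpha_i\alpha_j}{s_js_{j+1}}-\frac{\bar\alpha_i\bar\alpha_j}{\bar s_j\bar s_{j+1}}$ it exploits $\abs{\alpha_i}\le s_{j+1}$ (for $i>j$) and the inequality $\abs{\alpha_j}/\bar s_j\le 1$, so that every term carries only \emph{one} reciprocal factor $1/\bar s_j$ or $1/\bar s_{j+1}\le 2\sqrt n$; this gives $4\sqrt n\,\norm{\bm\alpha-\bar{\bm\alpha}}$ per diagonal entry and $8\sqrt n\,\norm{\bm\alpha-\bar{\bm\alpha}}$ per subdiagonal entry, and summing over the $\tfrac{n(n-1)}{2}+(n-1)\le n^2$ nonzero entries yields the stated $8n^{3/2}$ bound. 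So your opening claim that entry-wise estimation can only give $O(n^3)\norm{\bm\alpha-\bar{\bm\alpha}}$ is a strawman: it applies to the blind quotient rule, not to the paper's argument. The genuinely delicate point in the paper's route is proving $\abs{\alpha_j}\le\bar s_j$ (exact numerator against perturbed denominator), and that is precisely where the threshold $\norm{\bm\alpha-\bar{\bm\alpha}}<\tfrac{1}{8n}$ is consumed, via $\bar s_j^2-\alpha_j^2\ge\bar\alpha_n^2-2\abs{\Delta\alpha_j}>0$. Your column-wise argument replaces that lemma by the cleaner structural fact that the columns of $\bm H_\alpha$ are unit vectors ($\norm{\bm w_j}=g_j$, i.e., $\bm H_\alpha$ has orthonormal columns), which cancels one reciprocal factor wholesale; your constants do check out ($\abs{g_j-\bar g_j}\le(2+\varepsilon)\varepsilon$, $\norm{\bm w_j-\bar{\bm w}_j}\le(3+\varepsilon)\varepsilon+(1+\varepsilon)\delta_j$, $1/\bar g_j\le\tfrac{64n}{49}$, and Minkowski over the $n-1$ columns gives a leading coefficient near $6.9$, comfortably below $8$ for $n\ge 2$). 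Each approach buys something: the paper's is purely elementary and yields a per-entry (max-norm) bound as a byproduct, while yours is structurally tidier, localizes the data error $\delta_j$ column by column, and avoids the $\abs{\alpha_j}\le\bar s_j$ subtlety altogether.
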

\begin{proof}
	Let $s_i=\sqrt{\sum_{k=i}^n\alpha_k^2}$, let $\bar{s}_i=\sqrt{\sum_{k=i}^n\bar{\alpha}_k^2}$, let $\bm b_i=(0,\cdots,0,\alpha_i,\cdots,\alpha_n)$ and let $\bar{\bm b}_i=(0,\cdots,0,\bar{\alpha}_i,\cdots,\bar{\alpha}_n)$. It obviously holds that $\|\bm b_i-\bar{\bm b}_i\|\le \|\bm \alpha-\bar{\bm \alpha}\|$. So, it is obtained that
	$|s_i-\bar{s}_i|=|\|\bm b_i\|-\|\bar{\bm b}_i\||\le\|\bm b_i-\bar{\bm b}_i\|\le \|\bm \alpha-\bar{\bm \alpha}\|$. By the way, from $|\alpha_n|=\max_i\{|\alpha_i|\}$ and  $\norm{\bm \alpha}=1$, it follows that $|\alpha_n|\ge \frac{1}{\sqrt{n}}$. Thus, if $\|\bm \alpha-\bar{\bm\alpha}\|<\frac{1}{2\sqrt{n}}$, then it holds that $|\bar\alpha_n|>\frac{1}{2\sqrt{n}}$.
	
	Recall $\bm{H}_{\alpha}=(h_{i,j})$ and
	$$h_{i,j}=\begin{cases}
	\frac{s_{i+1}}{s_i} & \text{If $i=j$}\\
	-\frac{\alpha_i\alpha_j}{s_js_{j+1}} & \text{else if $i>j$ }\\
	0 &  \text{otherwise}.
	\end{cases} $$
	Let us consider the error of $\frac{s_{i+1}}{s_i}$:
	\[
	\begin{split}	\left|\frac{s_{i+1}}{s_i}-\frac{\bar{s}_{i+1}}{\bar{s}_i}\right|&=\left|\frac{s_{i+1}\bar{s}_i-s_i\bar{s}_{i+1}}{s_i\bar{s}_i}\right|
	=\left|\frac{s_{i+1}\bar{s}_i-s_is_{i+1}+s_is_{i+1}-s_i\bar{s}_{i+1}}{s_i\bar{s}_i}\right|\\
	&\le \frac{s_{i+1}|s_i-\bar{s}_i|}{s_i\bar{s}_i}+\frac{s_i|s_{i+1}-\bar{s}_{i+1}|}{s_i\bar{s}_i}\le \frac{|s_i-\bar{s}_i|}{\bar{s}_i}+\frac{|s_{i+1}-\bar{s}_{i+1}|}{\bar{s}_i}\\
	&\le \frac{2}{\bar{s}_i}\|\bm \alpha-\bar{\bm \alpha}\|\le \frac{2}{|\bar{\alpha}_n|}\|\bm \alpha-\bar{\bm \alpha}\|\le 4\sqrt{n}\|\bm \alpha-\bar{\bm \alpha}\|
	\end{split}
	\]
	And then consider the error of $\frac{\alpha_i\alpha_j}{s_js_{j+1}}$($i>j$):
	\begin{equation}\label{est_H}
	\begin{split}
	&\left|\frac{\alpha_i\alpha_j}{s_js_{j+1}}-\frac{\bar{\alpha}_i\bar{\alpha}_j}{\bar{s}_j\bar{s}_{j+1}}\right|=\frac{|\alpha_i\alpha_j\bar{s}_j\bar{s}_{j+1}
		-\bar{\alpha}_i\bar{\alpha}_js_js_{j+1}|}{s_js_{j+1}\bar{s}_j\bar{s}_{+1}} \\
	&\le\frac{1}{s_js_{j+1}\bar{s}_j\bar{s}_{j+1}}(|\alpha_i\alpha_j\bar{s}_j\bar{s}_{j+1}-\alpha_i\alpha_js_j\bar{s}_{j+1}|+|\alpha_i\alpha_js_j\bar{s}_{j+1}-\alpha_i\alpha_js_js_{j+1}|\\
	&+|\alpha_i\alpha_js_js_{j+1}-\bar{\alpha}_i\alpha_js_js_{j+1}|+|\bar{\alpha}_i\alpha_js_js_{j+1}-\bar{\alpha}_i\bar{\alpha}_js_js_{j+1}|)\\
	&=\frac{\alpha_i\alpha_j\bar{s}_{j+1}}{s_js_{j+1}\bar{s}_j\bar{s}_{j+1}}|\bar{s}_j-s_j|+\frac{\alpha_i\alpha_js_j}{s_js_{j+1}\bar{s}_j\bar{s}_{j+1}}|\bar{s}_{j+1}-s_{j+1}|\\
	&+\frac{\alpha_js_js_{j+1}}{s_js_{j+1}\bar{s}_j\bar{s}_{j+1}}|\alpha_i-\bar{\alpha}_i|+\frac{\bar{\alpha}_is_js_{j+1}}{s_js_{j+1}\bar{s}_j\bar{s}_{j+1}}|\alpha_j-\bar{\alpha}_j|\\
	&\le\frac{|\bar{s}_j-s_j|}{\bar{s}_j}+\frac{|\alpha_j|}{\bar{s}_j}\frac{|\bar{s}_{j+1}-s_{j+1}|}{\bar{s}_{j+1}}+\frac{|\alpha_j|}{\bar{s}_j}\frac{|\alpha_i-\bar{\alpha}_i|}{\bar{s}_{j+1}}
	+\frac{|\alpha_j-\bar{\alpha}_j|}{\bar{s}_j}
	\end{split}
	\end{equation}
	
	We need to estimate $\frac{|\alpha_j|}{\bar{s}_j}$. First, if $|\alpha_j|\le |\bar{\alpha}_j|$, then it holds that $\frac{|\alpha_j|}{\bar{s}_j}\le 1$. When $|\alpha_j|>|\bar{\alpha}_j|$, it follows that
	\[
	\bar{s}_j^2=\bar{\alpha}_j^2+\cdots+\bar{\alpha}_n^2=\alpha_j^2+2\Delta\alpha_j\alpha_j+\Delta\alpha_j^2+\bar{\alpha}_{j+1}^2+\cdots+\bar{\alpha}_n^2,\]
	so we have
	\[\bar{s}_j^2-\alpha_j^2\ge \sum_{k=j+1}^n\bar{\alpha}_k^2-2|\Delta\alpha_j||\alpha_j|\ge\bar{\alpha}_n^2-2|\Delta\alpha_j|.
	\]
	Note that $|\bar{\alpha}_n|>\frac{1}{2\sqrt{n}}$ and $|\Delta\alpha_j|<\frac{1}{8n}$ when $\|\bm \alpha-\bar{\bm\alpha}\|<\frac{1}{8n}$, which indicate  $\bar{s}_j^2-\alpha_j^2\ge \bar{\alpha}_n^2-2|\Delta\alpha_j|>\frac{1}{4n}-\frac{2}{8n}=0$. So it is proved that
	\begin{equation}\label{eq:ajsj}
	\frac{|\alpha_j|}{\bar{s}_j}\le 1
	\end{equation} when $\|\bm \alpha-\bar{\bm\alpha}\|<\frac{1}{8n}$.
	Applying \eqref{eq:ajsj} to \eqref{est_H} gives
	\[
	\begin{split}
	\left|\frac{\alpha_i\alpha_j}{s_js_{j+1}}-\frac{\bar{\alpha}_i\bar{\alpha}_j}{\bar{s}_j\bar{s}_{j+1}}\right|&\le\frac{|\bar{s}_j-s_j|}{\bar{s}_j}+\frac{|\alpha_j|}{\bar{s}_j}\frac{|\bar{s}_{j+1}-s_{j+1}|}{\bar{s}_{j+1}}+\frac{|\alpha_j|}{\bar{s}_j}\frac{|\alpha_i-\bar{\alpha}_i|}{\bar{s}_{j+1}}
	+\frac{|\alpha_j-\bar{\alpha}_j|}{\bar{s}_j}\\
	&\le\frac{|\bar{s}_j-s_j|}{\bar{s}_j}+\frac{|\bar{s}_{j+1}-s_{j+1}|}{\bar{s}_{j+1}}+\frac{|\alpha_i-\bar{\alpha}_i|}{\bar{s}_{j+1}}
	+\frac{|\alpha_j-\bar{\alpha}_j|}{\bar{s}_j}\\
	&\le\frac{4}{\frac{1}{2\sqrt{n}}}\|\bm \alpha-\bar{\bm \alpha}\|=8\sqrt{n}\|\bm \alpha-\bar{\bm \alpha}\|.
	\end{split}
	\]
With the assumption of $\|\bm\alpha-\bar{\bm\alpha}\|<\frac{1}{8n}$, it follows that
	\[
	\begin{split}
	\|\bm{H}_\alpha-\bm{H}_{\bar{\alpha}}\|_F\le 8\sqrt{n}\sqrt{\frac{n(n-1)}{2}+(n-1)}\|\bm\alpha-\bar{\bm \alpha}\|\le 8n^{3/2}\|\bm\alpha-\bar{\bm \alpha}\|.
	\end{split}
	\]
	The proof is finished.
\end{proof}

Now we construct the input $\overline{\bm H}_{{\alpha}}$ of $\PSLQ_{\varepsilon}$ from empirical data $\bar{\bm{\alpha}}$. In this paper, we restrict ourselves under exact arithmetic, i.e., we take $\overline{\bm H}_{\alpha}={\bm H}_{\bar{\alpha}}$. Applying this to Theorem \ref{thm:forward}  yields the following particular error control strategy.

\begin{theorem}\label{thm:main}
	Let $\bm \alpha\in\real^n$ be a unit vector with $|\alpha_n|=\max_i\{|\alpha_i|\}$  and $\varepsilon>0$. Suppose $\bm \alpha$ has an integer relation with $2$-norm bounded from above by $M$. Given empirical data $\bm{\bar{\alpha}}$ with
	\[
	\norm{\bm\alpha-\bar{\bm\alpha}}<\varepsilon_1 < \frac{\varepsilon}{16M C n^{3/2}},
	\]
	if $\PSLQ_\varepsilon$ with
	\[
	\varepsilon_2<\frac{\varepsilon}{2C\alpha_n}
	\]
	returns $\bm m$ with $\norm{\bm m}<M$, then $\abs{\inner{\bm \alpha,\bm m}}<\varepsilon$, where $C = \frac{2(\sqrt{(n-2)\alpha_n^2+1}+\alpha_n)}{\alpha_n}$ and $M>0$.
\end{theorem}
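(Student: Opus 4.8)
The plan is to derive Theorem \ref{thm:main} as an immediate specialization of the forward error bound in Theorem \ref{thm:forward}, with the perturbation parameter $\varepsilon_3$ controlled by the hyperplane-matrix perturbation estimate of Lemma \ref{lem:Hbar}. Since here we take the input to $\PSLQ_\varepsilon$ to be $\overline{\bm H}_\alpha = \bm H_{\bar\alpha}$, the relevant perturbation is $\varepsilon_3 = \norm{\bm H_\alpha - \bm H_{\bar\alpha}}_F$, so the whole argument reduces to tracking how the three tolerances $\varepsilon_1,\varepsilon_2,\varepsilon_3$ propagate and then checking that the cited results are actually applicable.

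First I would bound $\varepsilon_3$ in terms of $\varepsilon_1$. The hypothesis $\varepsilon_1 < \frac{\varepsilon}{16MCn^{3/2}}$, together with $M\ge 1$ and $C>2$ (which follow because $\bm m$ is a nonzero integer vector and $\sqrt{(n-2)\alpha_n^2+1}\ge 1$, $\alpha_n\le 1$), places $\norm{\bm\alpha-\bar{\bm\alpha}}$ below the threshold $\frac{1}{8n}$ required by Lemma \ref{lem:Hbar}. That lemma then gives $\varepsilon_3 = \norm{\bm H_\alpha - \bm H_{\bar\alpha}}_F < 8n^{3/2}\norm{\bm\alpha-\bar{\bm\alpha}} < 8n^{3/2}\varepsilon_1$, and substituting the bound on $\varepsilon_1$ yields $\varepsilon_3 < \frac{\varepsilon}{2MC}$. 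Before invoking Theorem \ref{thm:forward} I would verify its non-degeneracy hypothesis $\varepsilon_3 < \frac{\alpha_n}{2\sqrt{(n-2)\alpha_n^2+1}}$; this holds precisely because the constant $C = \frac{2(\sqrt{(n-2)\alpha_n^2+1}+\alpha_n)}{\alpha_n}$ already carries $\sqrt{(n-2)\alpha_n^2+1}$ in its numerator, so the estimate $\frac{\varepsilon}{2MC}$ sits below the threshold as soon as $\varepsilon\le 2M$.

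With the hypotheses in place, Theorem \ref{thm:forward} gives $\abs{\inner{\bm\alpha,\bm m}} < C(\norm{\bm m}\varepsilon_3 + \alpha_n\varepsilon_2)$, and it remains to feed in the two individual tolerance bounds. Using $\norm{\bm m} < M$ together with $\varepsilon_3 < \frac{\varepsilon}{2MC}$ gives $\norm{\bm m}\varepsilon_3 < \frac{\varepsilon}{2C}$, while $\varepsilon_2 < \frac{\varepsilon}{2C\alpha_n}$ gives $\alpha_n\varepsilon_2 < \frac{\varepsilon}{2C}$. Adding these and multiplying by $C$ produces $\abs{\inner{\bm\alpha,\bm m}} < C\cdot\frac{\varepsilon}{C} = \varepsilon$, which is the claim. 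The genuinely delicate step—hence the one I would watch most carefully—is not the final arithmetic but the two applicability checks: that $\norm{\bm\alpha-\bar{\bm\alpha}} < \frac{1}{8n}$ for Lemma \ref{lem:Hbar} and that $\varepsilon_3$ stays below $\frac{\alpha_n}{2\sqrt{(n-2)\alpha_n^2+1}}$ for Theorem \ref{thm:forward}. Both hold in the intended regime where $\varepsilon$ is small relative to $M$, and I would make explicit the implicit use of $M\ge 1$ that this regime relies on.
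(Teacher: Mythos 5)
Your proposal is correct and follows essentially the same route as the paper's own proof: Lemma \ref{lem:Hbar} converts the hypothesis on $\varepsilon_1$ into $\varepsilon_3=\norm{\bm H_{\bar\alpha}-\bm H_\alpha}_F<\frac{\varepsilon}{2MC}$, and Theorem \ref{thm:forward} then gives $\abs{\inner{\bm \alpha,\bm m}}<C\left(M\cdot\frac{\varepsilon}{2MC}+\alpha_n\varepsilon_2\right)<\frac{\varepsilon}{2}+\frac{\varepsilon}{2}=\varepsilon$. If anything, your write-up is more careful than the paper's two-line proof, which invokes Lemma \ref{lem:Hbar} and Theorem \ref{thm:forward} without verifying their hypotheses $\norm{\bm\alpha-\bar{\bm\alpha}}<\frac{1}{8n}$ and $\varepsilon_3<\frac{\alpha_n}{2\sqrt{(n-2)\alpha_n^2+1}}$ --- the applicability checks you rightly flag as requiring $\varepsilon$ to be small relative to $M$ (and $M\ge 1$).
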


\begin{proof}
	From Lemma \ref{lem:Hbar}, it holds that
	\[
	\norm{\overline{\bm H}_\alpha-{\bm H}_\alpha}_F=\norm{{\bm H}_{\bar{\alpha}}-{\bm H}_\alpha}_F<\frac{\varepsilon}{2M\cdot C}.
	\]
	Then Theorem \ref{thm:forward} implies
	\begin{equation}\label{eq:eps}
	\abs{\inner{\bm \alpha,\bm m}}<C\left(M\frac{\varepsilon}{2M\cdot C} +\alpha_n\varepsilon_2\right)< \frac{\varepsilon}{2}+\frac{\varepsilon}{2}=\varepsilon.
	\end{equation}
	The theorem is proved.
\end{proof}

\subsection{Some Remarks}

It should be noted that the results presented in Theorem \ref{lem:zn-1}, \ref{thm:forward} and \ref{thm:main} can be applied not only to the standard PSLQ algorithm, but also to the multi-pair variant of PSLQ \cite[Section 6]{BaileyBroadhurst2000}. The reason is that all the proofs of these theorems are independent of the swap strategy. The multi-pair variant can be seen as a parallel version of PSLQ, in which several pairs of rows of the matrix $\bm H$ are swaped simultaneously, and it is much more efficient than the standard PSLQ and hence utilized in almost all of the applications in practice.   We also note that the iteration bound in Theorem \ref{thm:term} may not hold for the multi-pair variant; we refer to \cite[page 1729]{BaileyBroadhurst2000} and \cite[Section 3]{FengChenWu2014} for this topic.

It is not difficult to verify that Theorem \ref{thm:main} still holds for $\varepsilon_1<\frac{\omega\varepsilon}{8MCn^{3/2}}$ and $\varepsilon_2<\frac{(1-\omega)\varepsilon}{C\alpha_n}$ for any $0<\omega<1$. The error control strategy given in Theorem \ref{thm:main} just simply takes $\omega=1/2$. Examples in the  next section show the effectiveness of this strategy, but, the optimal choice for $\omega$ is beyond the scope of this paper.

Figure \ref{fig:pic} shows the relationships among the main notations of this paper.
In this figure, the solid lines indicate the routine of $\PSLQ_\varepsilon$ for empirical input data $\bm{\bar{\alpha}}$ with $\norm{\bm \alpha-\bar{\bm\alpha}}< \varepsilon_1$. According to Theorem \ref{thm:main}, if the returned $\bm m$ by $\PSLQ_\varepsilon$ satisfies $\norm{\bm m}<M$ then we can guarantee that $\abs{\inner{\bm m, \bm \alpha}}<\varepsilon$.
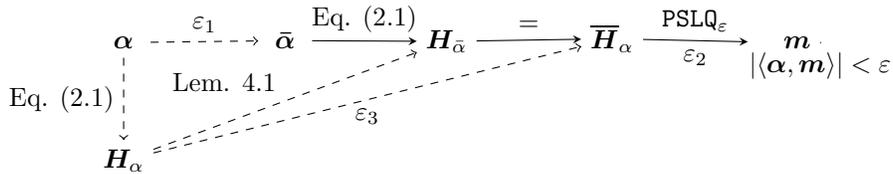
\begin{figure}[!h]
	\centering
	\begin{tikzpicture}
	\matrix (m) [matrix of math nodes,row sep=3em,column sep=4em,minimum width=2em]
	{
		\bm \alpha &\bm{\bar{\alpha}}& \bm H_{\bar{\alpha}} &\overline{\bm H}_{\alpha} &\,\,\,\,\,\,\bm m \\
		\bm H_\alpha & \\};
	\path[-stealth]
	(m-1-1) edge[dashed, ->] node [left] {Eq. \eqref{Halpha}} (m-2-1)
	edge [dashed, ->] node [above] {$\varepsilon_1$} (m-1-2)
	(m-1-2)		edge  node [above] {Eq. \eqref{Halpha}} (m-1-3)
	(m-1-3)		edge  node [above] { = }  (m-1-4)
	(m-1-4)		edge  node [above] {$\PSLQ_\varepsilon$} node [below] {$\varepsilon_2$} (m-1-5)
	(m-2-1) edge [dashed, ->] node [above left] {Lem. \ref{lem:Hbar}} (m-1-3)
	(m-2-1) edge [dashed, ->] node [below] {$\varepsilon_3$} (m-1-4)
	(m-1-5) edge [dashed, -] node [below] {$\abs{\inner{\bm \alpha,\bm m}}<\varepsilon$} (m-1-5);
	\end{tikzpicture}
	\caption{An illustrative picture of relationships among the main notations}
	\label{fig:pic}
\end{figure}

As mentioned previously, high precision arithmetic  must be used for almost all applications of $\PSLQ$. In practice, Bailey (see, e.g., \cite{Bailey2000}) suggested that if one wishes to recover a relation for an $n$-dimensional vector, with coefficients of maximum size $\log_{10} G$ decimal digits, then the input vector $\bm \alpha$ must be specified to at least $n\log_{10} G$ digits, and one must employ floating-point arithmetic accurate to at least $n\log_{10} G$ digits.
However, there seems no theoretical results about how to decide the precision generally. Theorem \ref{thm:forward} and \ref{thm:main} in this paper can be seen as theoretical sufficient conditions for $\PSLQ$ with empirical input data. We show in the next subsection that these theoretical results indeed give some effective strategies for the input data precision and the termination condition in practice.

\subsection{Numerical Examples}

In this subsection, we give some examples to illustrate our strategy of error control based on Theorem \ref{thm:main}. We use our own implementation of $\PSLQ_\varepsilon$ in Maple, which takes the running precision \texttt{Digits}, a target accuracy $\varepsilon$ and an upper bound on the coefficients of the expected relation $G$ as its input. We use $M=\sqrt{n}G$ as its $2$-norm bound and fix  \texttt{Digits\,:=\,200} and \texttt{Digits\,:=\,600}
for the first two examples and the third example, respectively, so that it is  sufficient to guarantee the correctness  and that it can mimic the exact real arithmetic.

\begin{example}[Transcendental numbers]\label{exmp:trans}
	Equation (69) of \cite{Bailey2016} states that $\bm \beta = (t,1, \ln 2,  \ln^2 2, {\pi}^{2})\in\real^5$ has an integer relation  $\bm m=(1, -5, 4, -16, 1)$, where
	\[
	t=\int_{0}^{1}\int_{0}^{1}\left(\frac{x-1}{x+1}\right)^2\left(\frac{y-1}{y+1}\right)^2\left(\frac{xy-1}{xy+1}\right)^2dxdy.
	\]
	We try to recover this relation for $\bm{\alpha}=\bm{\beta}/\norm{\bm{\beta}}$. 	
\end{example}

\begin{figure}[tb]
	\centering
	\begin{tikzpicture}[y=.3cm, x=.6cm]
	
	\draw (0,0) -- coordinate (x axis mid) (10,0);
	\draw (0,0) -- coordinate (y axis mid) (0,20);
	
	\foreach \x in {0,1,...,10}
	\draw (\x,1pt) -- (\x,-3pt)
	node[anchor=north] {\x};
	\foreach \y in {0,2,...,20}
	\draw (1pt,\y) -- (-3pt,\y)
	node[anchor=east] {\y};
	
	\node[below=0.8cm] at (x axis mid) {$\lceil-\log_{10}\varepsilon\rceil$};
	\node[rotate=90, above=0.8cm] at (y axis mid) {$\lceil-\log_{10}y\rceil$};
	
	\draw plot[mark=*, mark options={fill=white}]
	file {F11.data};
	\draw plot[mark=triangle*, mark options={fill=white}]
	file {F21.data};
	
	\draw plot[mark=*]
	file {F12.data};
	\draw plot[mark=triangle*]
	file {F22.data};
	
	\begin{scope}[shift={(1,16)}]
	\draw[yshift=1\baselineskip] (0,0) --
	plot[mark=*, mark options={fill=white}] (0.25,0) -- (0.5,0)
	node[right]{$y=\varepsilon_1$, incorrect output};
	\draw[yshift=0\baselineskip] (0,0) --
	plot[mark=triangle*, mark options={fill=white}] (0.25,0) -- (0.5,0)
	node[right]{$y=\varepsilon_2$, incorrect output};
	\draw[yshift=2\baselineskip] (0,0) --
	plot[mark=triangle*] (0.25,0) -- (0.5,0)
	node[right]{$y=\varepsilon_2$, correct output};
	\draw[yshift=3\baselineskip] (0,0) --
	plot[mark=*, mark options={fill=black}] (0.25,0) -- (0.5,0)
	node[right]{$y=\varepsilon_1$, correct output};
	\end{scope}
	\end{tikzpicture}
	\caption{Error control strategy for Example {\ref{exmp:trans}}}
	\label{fig:exmp1}
\end{figure}
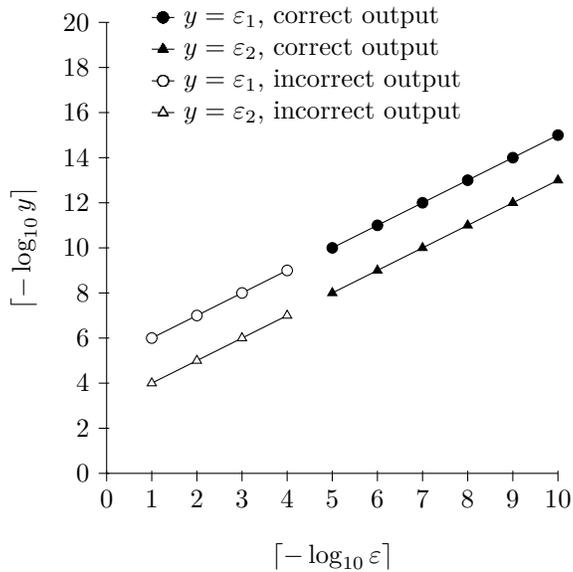

Because of involving transcendental numbers, we can only obtain empirical data of $\bm \alpha$. Suppose  that the maximum of the coefficients is bounded by $G=16$ and that the gap bound for this example is $10^{-6}$. (In fact, by exhaustive search, we can obtain a gap bound that is about $6.37\times 10^{-6}$.) Thus, the target precision $\varepsilon$ is set as $\varepsilon=10^{-5}$. It means that we want to find an integer vector $\bm m$ such that $\abs{\inner{\bm \alpha,\bm m}}<\varepsilon=10^{-5}$.  According to Theorem \ref{thm:main}, we obtain that $\varepsilon_1\approx 2.60\times 10^{-11}$ and $\varepsilon_2\approx 8.39\times 10^{-8}$. We run this example in the computer algebra system \texttt{Maple}. After $30$ iterations of PSLQ, the procedure returns a relation $\bm m=(1, -5, 4, -16, 1)$, which is an exact integer relation for $\bm \alpha$.

If we do not know a gap bound on $\abs{\inner{\bm m,\bm\alpha}}$, we can test $\varepsilon = 10^{-i}$ for $i=1,2,\cdots,10$, where the corresponding $\varepsilon_1$ and $\varepsilon_2$ are decided according to Theorem \ref{thm:main}.  As shown in Figure \ref{fig:exmp1},
for $i=1,2,3,4$, no correct answer is obtained, but for $5\le i\le 10$ the procedure always returns the same relation $\bm m$. Further, the difference between  $\lceil-\log_{10}\varepsilon_1\rceil$ and $\lceil-\log_{10}\varepsilon_2\rceil$ does not change for different $\varepsilon$.

Bailey's estimation  is $\lceil n\log_{10}G\rceil = 7$ decimal digits that indicates $\varepsilon_1<10^{-7}$, which is relatively compact for the above setting. However, Bailey's estimation still has the following drawbacks. For one thing, Bailey's estimation does not suggest when the algorithm terminates, i.e., how to choose $\varepsilon_2$, while Theorem \ref{thm:main} suggests the quantity that  $\varepsilon_2$ should be larger than $\varepsilon_1$. This is consistent with intuition: the error would be amplified by exact computation with empirical data as input. In fact, if we do not have the error control strategy as indicated by Theorem \ref{thm:main}, we can only use a trial-and-error approach to decide the termination precision $\varepsilon_2$, since the procedure may miss the correct answer for an incorrect $\varepsilon_2$, even with relatively high precision.

For another thing, if we do not know such a tight bound on the maximum coefficient of the relation, instead, for example, we only know $G\le 10^5$. For the same $\varepsilon$, we now have $\varepsilon_1\approx 4.16\times 10^{-15}$ and $\varepsilon_2\approx 8.39\times 10^{-8}$, for which our procedure work correctly, while at least  $\lceil n\log_{10}G\rceil = 25$ decimal digits is needed according to Bailey's estimation, which implies $\varepsilon_1\le 10^{-25}$.  For this example, by Bailey's estimation, $\lceil-\log_{10}\varepsilon_1\rceil$ increases linearly with $\lceil\log_{10}G\rceil$, whose slope is $n=5$. According to Theorem \ref{thm:main}, $\lceil-\log_{10}\varepsilon_1\rceil$ also increases linearly with $\lceil\log_{10}G\rceil$, but the slope is about $1$ only. In fact, according to Theorem \ref{thm:main}, we have $\lceil-\log_{10}\varepsilon_1\rceil\ge \lceil\log G +\log_{10}(16n^{5/2}C) -\log_{10}\varepsilon\rceil$.

\begin{example}[Algebraic numbers]\label{exmp:alg}
	Let $\alpha = (\sqrt [5]{3}+\sqrt [4]{2})^{-1}$ and let $\bm \alpha$ be the normalized vector of $(\alpha^{20},\alpha^{19},\cdots,\alpha,1)$. In this example, we try to recover the coefficients of
	the minimal polynomial of $\alpha$. Suppose that we know in advance that the $\infty$-norm of the integer relation is at most $G=7440$.
\end{example}

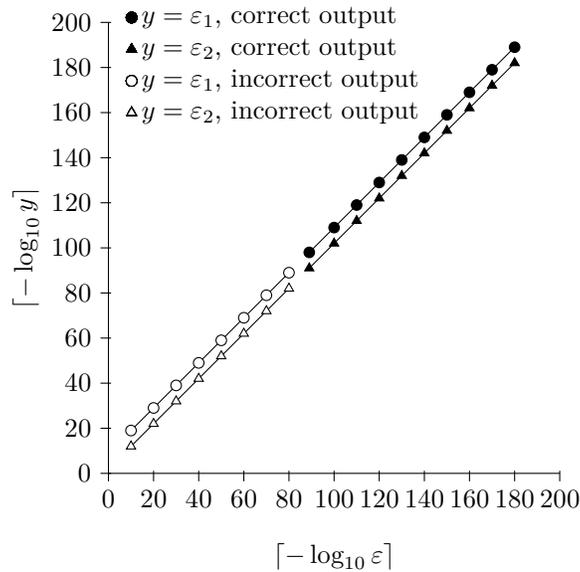
\begin{figure}[!h]
	\centering
	\begin{tikzpicture}[y=.03cm, x=.03cm]
	
	\draw (0,0) -- coordinate (x axis mid) (200,0);
	\draw (0,0) -- coordinate (y axis mid) (0,200);
	
	\foreach \x in {0,20,...,200}
	\draw (\x,1pt) -- (\x,-3pt)
	node[anchor=north] {\x};
	\foreach \y in {0,20,...,200}
	\draw (1pt,\y) -- (-3pt,\y)
	node[anchor=east] {\y};
	
	\node[below=0.8cm] at (x axis mid) {$\lceil-\log_{10}\varepsilon\rceil$};
	\node[rotate=90, above=0.8cm] at (y axis mid) {$\lceil-\log_{10}y\rceil$};
	
	\draw plot[mark=*, mark options={fill=white}]
	file {e11.data};
	\draw plot[mark=*]
	file {e12.data};
	\draw plot[mark=triangle*, mark options={fill=white}]
	file {e21.data};
	\draw plot[mark=triangle*]
	file {e22.data};

	\begin{scope}[shift={(10,160)}]
	\draw[yshift=1\baselineskip] (0,0) --
	plot[mark=*, mark options={fill=white}] (0.25,0) -- (0.5,0)
	node[right]{$y=\varepsilon_1$, incorrect output};
	\draw[yshift=0\baselineskip] (0,0) --
	plot[mark=triangle*, mark options={fill=white}] (0.25,0) -- (0.5,0)
	node[right]{$y=\varepsilon_2$, incorrect output};
	\draw[yshift=2\baselineskip] (0,0) --
	plot[mark=triangle*] (0.25,0) -- (0.5,0)
	node[right]{$y=\varepsilon_2$, correct output};
	\draw[yshift=3\baselineskip] (0,0) --
	plot[mark=*, mark options={fill=black}] (0.25,0) -- (0.5,0)
	node[right]{$y=\varepsilon_1$, correct output};
	\end{scope}
	\end{tikzpicture}
	\caption{Error control strategy for Example {\ref{exmp:alg}}}
	\label{fig:exmp2}
\end{figure}

Bailey's estimation  suggests that $\bm \alpha$ should be computed with  at least $\lceil n\log_{10} G\rceil = 82$ exact decimal digits, which implies  $\varepsilon_1<10^{-82}$. However, $\PSLQ_\varepsilon$ does not  return a relation with coefficient bounded by $7440$. This may be caused by the fact that Bailey's estimation is not sufficient to compute an integer relation.

Let us set $\varepsilon = 10^{-89}$ so that $\varepsilon_1\approx 1.73\times10^{-98}$ and $\varepsilon_2\approx 4.99\times10^{-91}$, and our procedure returns a relation
\[\begin{split}
\bm m = (49, -1080, 3960, -3360, 80, -108, -6120, -7440,& \\
-80, 0, 54, -1560, 40, 0, 0, -12, -10, 0, 0, 0, 1&)
\end{split}
\]
after $3525$ iterations. It can be checked that this relation  corresponds exactly to the coefficients of the minimal polynomial of  $\alpha$.

For the same $\varepsilon$ and $\varepsilon_1$, if we do not set $\varepsilon_2$ as suggested by Theorem \ref{thm:main}, say, $\varepsilon_2\approx 10^{-96}$, then the procedure misses the correct relation.

If we set $\varepsilon=10^{-88}$, our procedure does not return the correct answer. This can be seen as evidence for that the sharp gap bound is near to $10^{-89}$. We also test for $\varepsilon=10^{-(100-10i)}$ with $i=1, 2, \cdots, 9$. Each of these tests does not return the correct answer. If we set $\varepsilon$ more strictly, which means paying more precision, for example $\varepsilon=10^{-(100+10i)}$ with $i=1, 2, \cdots, 8$, the procedure always works well and returns the same $\bm m$ as above. The quantities $\lceil-\log_{10}\varepsilon_1\rceil$ and $\lceil-\log_{10}\varepsilon_2\rceil$ obtained from Theorem \ref{thm:main} are as shown in Figure \ref{fig:exmp2}.

\begin{example}[Algebraic numbers with higher degree]\label{exmp:alg-high}
	Let $\alpha = (\sqrt [7]{3}+\sqrt [7]{2})^{-1}$ and let $\bm \alpha$ be the normalized vector of $(\alpha^{49},\alpha^{48},\cdots,\alpha,1)$.
\end{example}

For this example, the dimension is $50$ and the $\infty$-norm of the integer relation is $G=966420105$. Bailey's estimation  suggests that $\bm \alpha$ should be computed with at least  $\lceil n\log_{10} G\rceil = 450$ exact decimal digits, which implies $\varepsilon_1<10^{-450}$. Under this setting, $\PSLQ_\varepsilon$ fails to find the correct relation. The reason is that this precision is not enough to achieve the gap bound.
In fact, according to our tests, the gap bound for this example is about $10^{-487}$; see Figure \ref{fig:exmp3}. This shows that Bailey's estimation is not sufficient, but still necessary.

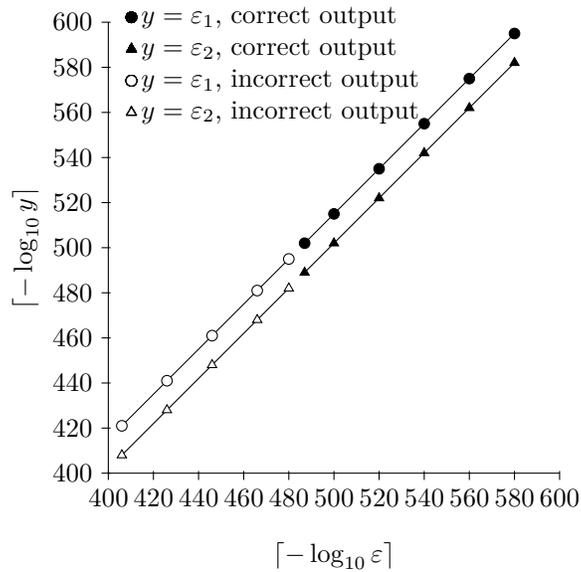
\begin{figure}[!h]
	\centering
	\begin{tikzpicture}[y=.03cm, x=.03cm]
	
	\draw (400,400) -- coordinate (x axis mid) (400,600);
	\draw (400,400) -- coordinate (y axis mid) (600,400);
	
	\foreach \x in {400,420,...,600}
	\draw (\x,401) -- (\x,397)
	node[anchor=north] {\x};
	\foreach \y in {400,420,...,600}
	\draw (401,\y) -- (397,\y)
	node[anchor=east] {\y};
	
	\node[below=0.8cm] at (500,400) {$\lceil-\log_{10}\varepsilon\rceil$};
	\node[rotate=90, above=0.8cm] at (400, 500) {$\lceil-\log_{10}y\rceil$};
	
	\draw plot[mark=*, mark options={fill=white}]
	file {H11.data};
	\draw plot[mark=*]
	file {H21.data};
	\draw plot[mark=triangle*, mark options={fill=white}]
	file {H12.data};
	\draw plot[mark=triangle*]
	file {H22.data};

	\begin{scope}[shift={(410,560)}]
	\draw[yshift=1\baselineskip] (0,0) --
	plot[mark=*, mark options={fill=white}] (0.25,0) -- (0.5,0)
	node[right]{$y=\varepsilon_1$, incorrect output};
	\draw[yshift=0\baselineskip] (0,0) --
	plot[mark=triangle*, mark options={fill=white}] (0.25,0) -- (0.5,0)
	node[right]{$y=\varepsilon_2$, incorrect output};
	\draw[yshift=2\baselineskip] (0,0) --
	plot[mark=triangle*] (0.25,0) -- (0.5,0)
	node[right]{$y=\varepsilon_2$, correct output};
	\draw[yshift=3\baselineskip] (0,0) --
	plot[mark=*, mark options={fill=black}] (0.25,0) -- (0.5,0)
	node[right]{$y=\varepsilon_1$, correct output};
	\end{scope}
	\end{tikzpicture}
	\caption{Error control strategy for Example {\ref{exmp:alg-high}}}
	\label{fig:exmp3}
\end{figure}


When we set $\varepsilon = 10^{-487}$ so that $\varepsilon_1\approx 1.61\times10^{-502}(<10^{-450})$ and $\varepsilon_2\approx 3.47\times10^{-489}$ according to Theorem \ref{thm:main}, then  our procedure returns the correct relation corresponding to the coefficients of the minimal polynomial of  $\alpha$ after $45385$ iterations. Furthermore, the similar phenomenon showed in Figure \ref{fig:exmp1} and \ref{fig:exmp2} also appears for this example. When we set $\varepsilon$ smaller than $10^{-487}$ (and set $\varepsilon_1$ and $\varepsilon_2$ accordingly), $\PSLQ_\varepsilon$ always returns the same integer relation, as shown in Figure \ref{fig:exmp3}. This shows that our error control strategy given in Theorem \ref{thm:main}  plays an important role for the correctness of $\PSLQ_\varepsilon$.

From examples above, we have the following two observations. Firstly, if one does not decide $\varepsilon_1$ and $\varepsilon_2$ by the error control strategy in Theorem \ref{thm:main}, then one may miss the correct relation. Secondly, with an effective $\varepsilon$, we  always obtain the same relation if we use the error control strategy  in Theorem \ref{thm:main}. This observation may be taken as strong evidence  that the returned relation is a true integer relation. In fact, assume that for all arbitrary small $\varepsilon>0$, $\PSLQ_\varepsilon$  always returns the same relation. Then the relation must be an exact integer relation in the sense that $\PSLQ_\varepsilon\rightarrow\bm m$ for $\varepsilon\rightarrow 0$. However, if no gap bound is known, determining  whether the returned relation is an exact integer relation within finite steps is still open.

\section{Disscussion and Conclusion}

In this paper, we give a new invariant relation of the celebrated integer relation finding  algorithm \PSLQ, and hence introduce a new termination condition for $\PSLQ_{\varepsilon}$. The new termination condition
allows us to compute integer relations by $\PSLQ_{\varepsilon}$ with empirical data as its input. By a perturbation analysis, we disclose the relationship between the accuracy of the input data ($\varepsilon_1$) and the output quality ($\varepsilon$, an upper bound on the absolute value of the inner product of the intrinsic data and the computed relation) of the algorithm. This relationship still holds for the multi-pair variant of PSLQ. Examples show that our error control strategies based on this relationship are very helpful in practice.

We note that all results presented in this paper are under the exact arithmetic computational model. Although we obtain some results about the error control for applications, we did not analyze the algorithm under an inexact arithmetic model, such as floating-point arithmetic. However, we believe that the results in this paper, say Theorem \ref{thm:forward}, would be indispensable in the analysis of a numerical $\PSLQ$ algorithm.

In fact, it is an intriguing topic to design and analyze an efficient numerical $\PSLQ$ algorithm.  For the moment, the main obstacle is to give a reasonable bound on the entries of unimodular matrices produced by the algorithm. Now, we can  only give an upper bound that is double exponential with respect to the working dimension, and hence resulting in an exponential time algorithm. Thus, it is a very interesting challenge to obtain an upper bound similar to, e.g., \cite[Lemma 6]{SaruchiMorelStehleVillard2014}, where the upper bound is of single exponential in the dimension.

\section*{Acknowledgments}
We would like to thank an anonymous referee for helpful suggestions that greatly improved the presentation of this papr.

\bibliographystyle{amsplain}
\newcommand{\noopsort}[1]{}
\providecommand{\bysame}{\leavevmode\hbox to3em{\hrulefill}\thinspace}
\providecommand{\MR}{\relax\ifhmode\unskip\space\fi MR }
\providecommand{\MRhref}[2]{%
  \href{http://www.ams.org/mathscinet-getitem?mr=#1}{#2}
}
\providecommand{\href}[2]{#2}

\appendix
\section{Proof of Lemma \ref{lem:normH}}\label{sec:condnum}

We consider the following  submatrix of $\bm{H}_\alpha$, denoted by $\bm{H}_{[1..n-1]}$,
$$\bm{H}_{[1..n-1]}=\begin{pmatrix}\frac{s_2}{s_1}&0&0&\cdots&0&0\\
\frac{-\alpha_2\alpha_1}{s_1s_2}&\frac{s_3}{s_2}&0&\cdots&0&0 \\
\frac{-\alpha_3\alpha_1}{s_1s_2}&\frac{-\alpha_3\alpha_2}{s_2s_3}&\frac{s_4}{s_3}&\cdots&0&0\\
\frac{-\alpha_4\alpha_1}{s_1s_2}&\frac{-\alpha_4\alpha_2}{s_2s_3}&\frac{-\alpha_4\alpha_3}{s_3s_4}&\cdots&0&0\\
\vdots &\vdots&\vdots&\vdots&\vdots&\vdots\\
\frac{-\alpha_{n-2}\alpha_1}{s_1s_2}&\frac{-\alpha_{n-2}\alpha_2}{s_2s_3}&\frac{-\alpha_{n-2}\alpha_3}{s_3s_4}&\cdots&\frac{s_{n-1}}{s_{n-2}}&0\\
\frac{-\alpha_{n-1}\alpha_1}{s_1s_2}&\frac{-\alpha_{n-1}\alpha_2}{s_2s_3}&\frac{-\alpha_{n-1}\alpha_3}{s_3s_4}&\cdots&\frac{-\alpha_{n-1}\alpha_{n-2}}{s_{n-2}s_{n-1}}&\frac{s_n}{s_{n-1}}
\end{pmatrix}.$$
By linear algebra, its inverse is
\begin{equation}
\bm{H}^{-1}_{[1..n-1]}=\begin{pmatrix}\frac{s_1}{s_2}&0&0&\cdots&0&0\\
\frac{\alpha_1\alpha_2}{s_2s_3}&\frac{s_2}{s_3}&0&\cdots&0&0 \\
\frac{\alpha_1\alpha_3}{s_3s_4}&\frac{\alpha_2\alpha_3}{s_3s_4}&\frac{s_3}{s_4}&\cdots&0&0\\
\frac{\alpha_1\alpha_4}{s_4s_5}&\frac{\alpha_2\alpha_4}{s_4s_5}&\frac{\alpha_3\alpha_4}{s_4s_5}&\cdots&0&0\\
\vdots &\vdots&\vdots&\vdots&\vdots&\vdots\\
\frac{\alpha_1\alpha_{n-2}}{s_{n-2}s_{n-1}}&\frac{\alpha_2\alpha_{n-2}}{s_{n-2}s_{n-1}}&\frac{\alpha_3\alpha_{n-2}}{s_{n-2}s_{n-1}}&\cdots&\frac{s_{n-2}}{s_{n-1}}&0\\
\frac{\alpha_1\alpha_{n-1}}{s_{n-1}s_n}&\frac{\alpha_2\alpha_{n-1}}{s_{n-1}s_n}&\frac{\alpha_3\alpha_{n-1}}{s_{n-1}s_n}&\cdots&\frac{\alpha_{n-2}\alpha_{n-1}}{s_{n-1}s_n}&\frac{s_{n-1}}{s_n}
\end{pmatrix}.
\end{equation}

In the following, we compute the F-norm of $\bm{H}^{-1}_{[1..n-1]}$.
First, consider the $j$-th column of  $\bm{H}_{[1..n-1]}^{-1}$:
\begin{equation*}
\begin{split}
\|H_j^{-1}\|^2&=\frac{s_j^2}{s_{j+1}^2}+\sum_{k=j+1}^{n-1}\frac{\alpha_j^2\alpha_k^2}{s_k^2s_{k+1}^2}=\frac{s_j^2}{s_{j+1}^2}
+\alpha_j^2\sum_{k=j+1}^{n-1}\frac{\alpha_k^2}{s_k^2s_{k+1}^2}\\
&=\frac{s_j^2}{s_{j+1}^2}+\alpha_j^2\sum_{k=j+1}^{n-1}(\frac{1}{s_{k+1}^2}-\frac{1}{s_k^2})=\frac{s_j^2}{s_{j+1}^2}
+\alpha_j^2(\frac{1}{s_n^2}-\frac{1}{s_{j+1}^2})\\
&=\frac{s_j^2-\alpha_j^2}{s_{j+1}^2}+\frac{\alpha_j^2}{s_n^2}=\frac{s_{j+1}^2}{s_{j+1}^2}+\frac{\alpha_j^2}{\alpha_n^2}
=1+\frac{\alpha_j^2}{\alpha_n^2},
\end{split}
\end{equation*}
so we have
\[
\begin{split}
\|\bm{H}_{[1..n-1]}^{-1}\|_F^2&=\sum_{j=1}^{n-1}\|H_j^{-1}\|^2=(n-1)+\frac{\sum_{j=1}^{n-1}\alpha_j^2}{\alpha_n^2}\\
&=(n-1)+\frac{\|\alpha\|^2-\alpha_n^2}{\alpha_n^2}=(n-2)+\frac{\|\alpha\|^2}{\alpha_n^2}.
\end{split}
\]
In addition, we can compute the F-norm of $\bm{H}_{[1..n-1]}$ as follows:
\[
\begin{split}
\|\bm{H}_{[1..n-1]}\|_F^2&=\|\bm{H}_{\alpha}\|_F^2-\sum_{i=1}^{n-1}\frac{\alpha_n^2\alpha_i^2}{s_i^2s_{i+1}^2}=(n-1)-\alpha_n^2\sum_{i=1}^{n-1}\frac{\alpha_i^2}{s_i^2s_{i+1}^2}\\
&=(n-1)-\alpha_n^2\sum_{i=1}^{n-1}(\frac{1}{s_{i+1}^2}-\frac{1}{s_i^2})=(n-1)-\alpha_n^2(\frac{1}{s_n^2}-\frac{1}{s_1^2})\\
&=(n-1)-1+\frac{\alpha_n^2}{\|\alpha\|^2}=(n-2)+\frac{\alpha_n^2}{\|\alpha\|^2},
\end{split}
\]
as claimed in Lemma \ref{lem:normH}.

\section{Proof of Theorem \ref{thm:term}}\label{app:term}

Define the $\Pi$ function after exactly $k$ iterations as follows
\[
\Pi(k) = \prod_{j=1}^{n-1}\max\left(\abs{h_{i,i}(k)}, \frac{h_{\max}(k)}{\gamma^{n-1}}\right)^{n-j},
\]
where $h_{\max}(k)$ is the maximum of $\abs{h_{i,i}(k)}$ for $i=1,2,\cdots, n-1$. It obviously holds that
$$\Pi(k) = \prod_{j=1}^{n-1}\max\left(\abs{h_{i,i}(k)}, \frac{h_{\max}(k)}{\gamma^{n-1}}\right)^{n-j}\ge \left(\frac{h_{\max}(k)}{\gamma^{n-1}}\right)^{\frac{n(n-1)}{2}}.$$

First, we assert that $h_{\max}(k)\ge h_{\max}(k+1)$. Size reduction does not affect $h_{i,i}(k)$, neither does $h_{\max}$. Let us consider the change of $h_{\max}$ in the  Bergman swap. Let Bergman swap occur at the $r$-th row. For the case of $r<n-1$, after the Bergman swap, we have that
\begin{align*}
&|h_{r,r}(k+1)|<\frac{1}{\tau} |h_{r,r}(k)|<|h_{r,r}(k)|=h_{\max}(k)\\
&|h_{r+1,r+1}(k+1)|=\frac{|h_{r,r}(k)h_{r+1,r+1}(k)|}{\sqrt{h_{r+1,r}^2(k)+h_{r+1,r+1}^2(k)}}\le |h_{r,r}(k)|=h_{\max}(k)
\end{align*}
and the others are unchanged, i.,e. $h_{i,i}(k+1)=h_{i,i}(k)$ for $i=1,\cdots,r-1,r+1,\cdots,n-1$. It shows that $h_{\max}(k)\ge h_{\max}(k+1)$ for $r<n-1$. For the case of $r=n-1$, after the Bergman swap, it holds that $|h_{n-1,n-1}(k+1)|<\frac{1}{\rho}|h_{n-1,n-1}(k)|\le h_{\max}(k)$ and the other $h_{i,i}$'s are unchanged. Therefore it is obtained that $h_{\max}(k)\ge h_{\max}(k+1)$ for $r=n-1$.

Second, we show that $ \Pi(k)>\tau\Pi(k+1)$. Let Bergman swap occurs at row $r$.

\noindent Case $r=n-1$: We have
\begin{equation*}
\begin{split}
&\frac{\Pi(k)}{\Pi(k+1)}=\frac{\max\{|h_{n-1,n-1}(k)|,\frac{h_{\max}(k)}{\gamma^{n-1}}\}}{\max\{|h_{n-1,n-1}(k+1)|,\frac{h_{\max}(k+1)}{\gamma^{n-1}}\}}
=\frac{|h_{n-1,n-1}(k)|}{\max\{|h_{n-1,n}(k)|,\frac{h_{\max}(k+1)}{\gamma^{n-1}}\}}\\
&=\begin{cases}
\frac{|h_{n-1,n-1}(k)|}{|h_{n,n-1}(k)}\ge \rho\ge\tau, & \text{when $h_{n,n-1}(k)>\frac{h_{\max}(k+1)}{\gamma^{n-1}}$,} \\
\frac{|h_{n-1,n-1}(k)|}{\frac{h_{\max}(k+1)}{\gamma^{n-1}}}\ge \frac{|h_{n-1,n-1}(k)|}{\frac{h_{\max}(k)}{\gamma^{n-1}}}\ge\gamma\ge\tau,  & \text{otherwise,}
\end{cases}
\end{split}
\end{equation*}
where we used $\gamma^{n-1}h_{n-1,n-1}(k)\ge h_{\max}(k)$ and $h_{n-1,n-1}(k+1)=h_{n-1,n}(k)$.

Cases $r<n-1$: Let $$A=\frac{\max\{|h_{r,r}(k)|,\frac{h_{\max}(k)}{\gamma^{n-1}}\}}{\max\{|h_{r,r}(k+1)|,\frac{h_{\max}(k+1)}{\gamma^{n-1}}\}},\,  B=\frac{\max\{|h_{r+1,r+1}(k)|,\frac{h_{\max}(k)}{\gamma^{n-1}}\}}{\max\{|h_{r+1,r+1}(k+1)|,\frac{h_{\max}(k+1)}{\gamma^{n-1}}\}}.$$
Then $\frac{\Pi(k)}{\Pi(k+1)}=A(AB)^{n-r-1}$. Set $\eta=h_{r,r}(k)$, $\lambda=h_{r+1,r+1}(k)$, $\beta=h_{r+1,r}(k)$ and $\delta=\sqrt{\beta^2+\lambda^2}$. Noticing that $h_{\max}(k)\ge h_{\max}(k+1)$ and $|\eta|>\frac{h_{\max}(k)}{\gamma^{n-1}}$ yields
\begin{equation}\label{A_est}
\begin{split}
A&=\frac{\max\{|h_{r,r}(k)|,\frac{h_{\max}(k)}{\gamma^{n-1}}\}}{\max\{|h_{r,r}(k+1)|,\frac{h_{\max}(k+1)}{\gamma^{n-1}}\}}=
\frac{|\eta|}{\max\{\delta,\frac{h_{\max}(k+1)}{\gamma^{n-1}}\}}\\
&=\begin{cases}
\frac{|\eta|}{\delta}=\frac{1}{\sqrt{\frac{\beta^2}{\eta^2}+\frac{\lambda^2}{\eta^2}}}\ge \tau, &\text{when $\delta\ge \frac{h_{\max}(k+1)}{\gamma^{n-1}}$,}\\
\frac{|\eta|}{\frac{h_{\max}(k+1)}{\gamma^{n-1}}}=\frac{|\eta|\gamma^{n-1}}{h_{\max}(k+1)}\ge \frac{|\eta|\gamma^{n-1}}{h_{\max}(k)}\ge\gamma\ge\tau,  &\text{otherwise}.
\end{cases}
\end{split}
\end{equation}
And then, we consider $AB=A\cdot \frac{\max\{|\lambda|,\frac{h_{\max}(k)}{\gamma^{n-1}}\}}{\max\{\frac{|\eta\lambda|}{\delta},\frac{h_{\max}(k+1)}{\gamma^{n-1}}\}}$. When  $|\lambda|\ge \frac{h_{\max}(k)}{\gamma^{n-1}}$, it is easily deduced that $\delta\ge|\lambda|\ge \frac{h_{\max}(k)}{\gamma^{n-1}}\ge \frac{h_{\max}(k+1)}{\gamma^{n-1}}$ and $\frac{|\eta\lambda|}{\delta}>\lambda\ge \frac{h_{\max}(k+1)}{\gamma^{n-1}}$. Hence from equation (\ref{A_est}) it holds that
$$AB=A\cdot\frac{|\lambda|}{\frac{|\eta\lambda|}{\delta}}=A\cdot\frac{\delta}{|\eta|}
=\frac{|\eta|}{\delta}\cdot\frac{\delta}{|\eta|}=1.$$
When $|\lambda|< \frac{h_{\max}(k)}{\gamma^{n-1}}$, it holds that
\begin{equation*}
\begin{split}
AB=&A\cdot\frac{\frac{h_{\max}(k)}{\gamma^{n-1}}}{\max\{\frac{|\eta\lambda|}{\delta},\frac{h_{\max}(k+1)}{\gamma^{n-1}}\}}\\
=&\begin{cases}
A\cdot\frac{\frac{h_{\max}(k)}{\gamma^{n-1}}}{\frac{h_{\max}(k+1)}{\gamma^{n-1}}}\ge A\ge\tau> 1, \text{   if $\frac{|\eta\lambda|}{\delta}\le \frac{h_{\max}(k+1)}{\gamma^{n-1}} $,} \\
A\cdot\frac{\frac{h_{\max}(k)}{\gamma^{n-1}}}{\frac{|\eta\lambda|}{\delta}}=\begin{cases}
\frac{|\eta|}{\delta}\cdot \frac{h_{\max}(k)}{\gamma^{n-1}}\cdot\frac{\delta}{|\eta\lambda|}=\frac{h_{\max}(k)}{\lambda\gamma^{n-1}}>1, &\text{else if $\delta>\frac{h_{\max}(k+1)}{\gamma^{n-1}}$,} \\
\frac{|\eta|}{\frac{h_{\max}(k+1)}{\gamma^{n-1}}}\cdot\frac{\frac{h_{\max}(k)}{\gamma^{n-1}}}{\frac{|\eta\lambda|}{\delta}}\ge\frac{\delta}{|\lambda|}\ge 1, &\text{otherwise.}
\end{cases}
\end{cases}
\end{split}
\end{equation*}
Up to now, we have shown that $AB\ge 1$. Therefore
\begin{equation*}
\frac{\Pi(k)}{\Pi(k+1)}=A[AB]^{n-r-1}>A>\tau.
\end{equation*}
It is proved that
\begin{equation}\label{eq:alg2}
\left(\frac{h_{\max}(k)}{\gamma^{n-1}}\right)^{\frac{n(n-1)}{2}} \le\Pi(k)\le\frac{1}{\tau^k}.
\end{equation}
From $\tau>1$, we have
\[
k\le \frac{n(n-1)((n-1)\log\gamma + \log\frac{1}{h_{\max}(k)})}{2\log\tau}.
\]
From $\abs{h_{n, n-1}(k)}< \abs{h_{n-1, n-1}(k)} < h_{\max}(k)$, it always holds that  $h_{\max}(k)\ge \varepsilon_2$ before termination. Hence, we deduce that
\[
k\le \frac{n(n-1)[(n-1)\log\gamma + \log\frac{1}{\varepsilon_2}]}{2\log\tau},
\]
which completes the proof.

\end{document}